\documentclass[smallextended]{svjour3}

\usepackage[tight,footnotesize]{subfigure}
\usepackage{booktabs}
\usepackage{url}
\usepackage{natbib}
\usepackage[cmex10]{amsmath}
\usepackage{amssymb}
\usepackage{tikz}
\usepackage{cite}
\usepackage{xspace}
\usetikzlibrary{arrows,shapes,positioning,fit,calc, patterns}
\usepackage{pgfplots}
\usepackage{pgffor}
\usepackage{ifpdf}

\usepackage[ruled, vlined, nofillcomment, linesnumbered]{algorithm2e}

\usepackage[pdftitle={Fast Sequence Segmentation using Log-Linear Models},
            pdfauthor={Nikolaj Tatti},
			citecolor = black, linkcolor = black, colorlinks = true, bookmarksopen = true, bookmarksnumbered = true, hyperfootnotes = false]{hyperref}

\journalname{Data Mining and Knowledge Discovery}

\hyphenation{op-tical net-works semi-conduc-tor}

\newcommand{\set}[1]{\left\{#1\right\}}
\newcommand{\pr}[1]{\left(#1\right)}
\newcommand{\fpr}[1]{\mathopen{}\left(#1\right)}

\newcommand{\abs}[1]{{\left|#1\right|}}
\newcommand{\norm}[1]{\left\|#1\right\|}

\newcommand{\enpr}[2]{\pr{#1 ,\ldots , #2}}

\newcommand{\real}{\mathbb{R}}
\newcommand{\funcdef}[3]{{#1}:{#2} \to {#3}}

\newcommand{\freq}[1]{\mathit{av}\fpr{#1}}
\newcommand{\cs}[1]{\mathit{c}\fpr{#1}}

\newcommand{\intl}[1]{\mathit{int}_L\fpr{#1}}
\newcommand{\intr}[1]{\mathit{int}_R\fpr{#1}}

\newcommand{\score}[1]{\mathit{sc}\fpr{#1}}
\newcommand{\diff}[1]{\mathit{diff}\fpr{#1}}
\newcommand{\borders}[1]{\mathit{borders}\fpr{#1}}
\newcommand{\btree}[1]{\mathit{btree}\fpr{#1}}
\newcommand{\children}[1]{\mathit{children}\fpr{#1}}

\newcommand{\define}{\leftarrow}

\newcommand{\updatetree}{\textsc{UpdateTree}\xspace}
\newcommand{\segment}{\textsc{Segment}\xspace}

\SetKw{OR}{or}
\SetKw{AND}{and}
\SetKwInOut{Input}{input\hspace*{0.4cm}}
\SetKwInOut{Output}{output}
\SetArgSty{textnormal}

\renewenvironment{proof}{\begin{oldproof}}{{\hfill \ensuremath{\Box}}\end{oldproof}}

\def\clap#1{\hbox to 0pt{\hss#1\hss}}


\newcommand{\slice}[3]{

  \draw[fill=#3, draw=#3, ultra thin] (0,0) -- (#1:1) arc (#1:#2:1) -- cycle;
}

\pgfdeclarelayer{background}
\pgfdeclarelayer{foreground}
\pgfsetlayers{background,main,foreground}

\definecolor{yafaxiscolor}{rgb}{0.3, 0.3, 0.3}

\definecolor{yafcolor1}{rgb}{0.4, 0.165, 0.553}
\definecolor{yafcolor2}{rgb}{0.949, 0.482, 0.216}
\definecolor{yafcolor3}{rgb}{0.47, 0.549, 0.306}
\definecolor{yafcolor4}{rgb}{0.925, 0.165, 0.224}
\definecolor{yafcolor5}{rgb}{0.141, 0.345, 0.643}
\definecolor{yafcolor6}{rgb}{0.965, 0.933, 0.267}
\definecolor{yafcolor7}{rgb}{0.627, 0.118, 0.165}
\definecolor{yafcolor8}{rgb}{0.878, 0.475, 0.686}

\newlength{\yafaxispad}
\setlength{\yafaxispad}{-2pt}
\newlength{\yaftlpad}
\setlength{\yaftlpad}{\yafaxispad}
\addtolength{\yaftlpad}{-0pt}
\newlength{\yaflabelpad}
\setlength{\yaflabelpad}{-2pt}
\newlength{\yafaxiswidth}
\setlength{\yafaxiswidth}{1.2pt}
\newlength{\yafticklen}
\setlength{\yafticklen}{2pt}

\makeatletter
\def\pgfplots@drawtickgridlines@INSTALLCLIP@onorientedsurf#1{}
\makeatother

\newcommand{\yafdrawaxis}[4]{
	\pgfplotstransformcoordinatex{#1}\let\xmincoord=\pgfmathresult 
	\pgfplotstransformcoordinatex{#2}\let\xmaxcoord=\pgfmathresult 
	\pgfplotstransformcoordinatey{#3}\let\ymincoord=\pgfmathresult 
	\pgfplotstransformcoordinatey{#4}\let\ymaxcoord=\pgfmathresult 
	\pgfsetlinewidth{\yafaxiswidth} 
	\pgfsetcolor{yafaxiscolor}
	\pgfpathmoveto{\pgfpointadd{\pgfpointadd{\pgfplotspointrelaxisxy{0}{0}}{\pgfqpointxy{\xmincoord}{0}}}{\pgfqpoint{-0.5\yafaxiswidth}{\yafaxispad}}}
	\pgfpathlineto{\pgfpointadd{\pgfpointadd{\pgfplotspointrelaxisxy{0}{0}}{\pgfqpointxy{\xmaxcoord}{0}}}{\pgfqpoint{0.5\yafaxiswidth}{\yafaxispad}}}
	\pgfpathmoveto{\pgfpointadd{\pgfpointadd{\pgfplotspointrelaxisxy{0}{0}}{\pgfqpointxy{0}{\ymincoord}}}{\pgfqpoint{\yafaxispad}{-0.5\yafaxiswidth}}}
	\pgfpathlineto{\pgfpointadd{\pgfpointadd{\pgfplotspointrelaxisxy{0}{0}}{\pgfqpointxy{0}{\ymaxcoord}}}{\pgfqpoint{\yafaxispad}{0.5\yafaxiswidth}}}
	\pgfusepath{stroke}
}

\pgfplotscreateplotcyclelist{yaf}{%
{yafcolor1,mark options={scale=0.75},mark=o}, 
{yafcolor2,mark options={scale=0.75},mark=square},
{yafcolor3,mark options={scale=0.75},mark=triangle},
{yafcolor4,mark options={scale=0.75},mark=o},
{yafcolor5,mark options={scale=0.75},mark=o},
{yafcolor6,mark options={scale=0.75},mark=o},
{yafcolor7,mark options={scale=0.75},mark=o},
{yafcolor8,mark options={scale=0.75},mark=o}} 

\pgfplotsset{axis y line=left, axis x line=bottom,
	tick align=outside,
	tickwidth=\yafticklen,
	clip = false,
    x axis line style= {-, line width = 0pt, color=black!0},
    y axis line style= {-, line width = 0pt, color=black!0},
    x tick style= {line width = \yafaxiswidth, color=yafaxiscolor, yshift = \yafaxispad},
    y tick style= {line width = \yafaxiswidth, color=yafaxiscolor, xshift = \yafaxispad},
    x tick label style = {font=\scriptsize, yshift = \yaftlpad},
    y tick label style = {font=\scriptsize, xshift = \yaftlpad},
    every axis y label/.style = {at = {(ticklabel cs:0.5)}, rotate=90, anchor=center, font=\scriptsize, yshift = -\yaflabelpad},
    every axis x label/.style = {at = {(ticklabel cs:0.5)}, anchor=center, font=\scriptsize, yshift = \yaflabelpad},
    x tick label style = {font=\scriptsize, yshift = 1pt},
    grid = major,
    major grid style  = {dash pattern = on 1pt off 3 pt},
	every axis plot post/.append style= {line width=\yafaxiswidth} ,
	legend cell align = left,
	legend style = {inner sep = 1pt, cells = {font=\scriptsize}},
	legend image code/.code={%
		\draw[mark repeat=2,mark phase=2,#1] 
		plot coordinates { (0cm,0cm) (0.15cm,0cm) (0.3cm,0cm) };%
	} 
}

\begin{document}
\title{Fast Sequence Segmentation using Log-Linear Models}
\author{Nikolaj Tatti}
\institute{
Nikolaj Tatti \at
Department of Mathematics and Computer Science, University of Antwerp, Antwerp,\\
Department of Computer Science, Katholieke Universiteit Leuven, Leuven,\\
Belgium\\
\email{nikolaj.tatti@gmail.com}
}

\maketitle

\begin{abstract}
Sequence segmentation is a well-studied problem, where given a sequence of
elements, an integer $K$, and some measure of homogeneity, the task is to split
the sequence into $K$ contiguous segments that are maximally homogeneous.  A
classic approach to find the optimal solution is by using a dynamic program.
Unfortunately, the execution time of this program is quadratic with respect to the length
of the input sequence. This makes the algorithm slow for a sequence of
non-trivial length.  In this paper we study segmentations whose measure of
goodness is based on log-linear models, a rich family that contains many of the
standard distributions.  We present a theoretical result allowing us to prune
many suboptimal segmentations.  Using this result, we modify the standard
dynamic program for one-dimensional log-linear models, and by doing so reduce
the computational time.  We demonstrate empirically, that this approach can
significantly reduce the computational burden of finding the optimal
segmentation.
\end{abstract}

\keywords{segmentation, pruning, change-point detection, dynamic program}

\section{Introduction}
\label{sec:introduction}
Sequence segmentation is a well-studied problem, where given a sequence of
elements, an integer $K$, and some measure of homogeneity, the task is to split
the sequence into $K$ contiguous segments that are maximally homogeneous.

An exact solution for segmentation with $K$ segments can be obtained by a classic dynamic
program in $O(L^2K)$ time, where $L$ is the length of the
sequence~\citep{bellman:61:on}. Due to the quadratic complexity, we cannot apply
segmentation for sequences of non-trivial length. 
In this paper we introduce a speedup to the dynamic program used for solving the
exact solution. Our key result, given in Theorem~\ref{thr:suff}, states that
when certain conditions are met, we can discard the candidate for a segment border,
thus speeding up the inner loop of the dynamic program.

We consider segmentation using the log-likelihood of a
log-linear model to score the goodness of individual segments. Many standard
distributions can be described as log-linear models, including  Bernoulli, Gamma,
Poisson, and Gaussian distributions. Moreover, when using a Gaussian
distribution, optimizing the log-likelihood is equal to the minimizing the
$L_2$ error (see Example~\ref{ex:gaussian}).

The conditions given in Theorem~\ref{thr:suff} are hard to verify, however, we
demonstrate that this can be done with relative ease for one-dimensional
models. The key idea is as follows: Consider segmenting the sequence given in
Figure~\ref{fig:toy:a} into $2$ segments using the $L_2$ error. Assume a segmentation $[1,
100]$, $[101, 200]$. Figure~\ref{fig:toy:b} tells us that this segmentation is not
optimal.
In fact, the optimal segmentation with 2 segments for this data is $[1, 70], [71, 200]$.

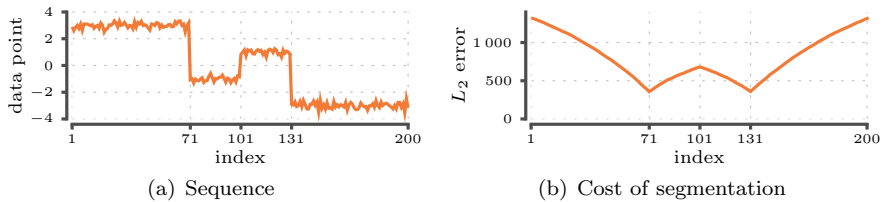
\begin{figure}[htb!]
\begin{center}
\subfigure[Sequence\label{fig:toy:a}]{
\begin{tikzpicture}
\begin{axis}[xlabel=index, ylabel= {data point},
	width = 6cm,
	height = 3cm,
	xmin = 1,
	xmax = 200,
	ymin = -4,
	ymax = 4,
	xtick = {1, 71, 101, 131, 200},
	scaled x ticks = false,
	cycle list name=yaf,
	xticklabel = {$\scriptstyle\pgfmathprintnumber{\tick}$},
	yticklabel = {$\scriptstyle\pgfmathprintnumber{\tick}$}
	]

\addplot[yafcolor2, no markers] table[x expr = {\lineno + 1}, y index = 0, header = false] {toy.dat};

\pgfplotsextra{\yafdrawaxis{1}{200}{-4}{4}}
\end{axis}
\end{tikzpicture}}
\subfigure[Cost of segmentation\label{fig:toy:b}]{
\begin{tikzpicture}
\begin{axis}[xlabel=index, ylabel= {$L_2$ error},
	width = 6cm,
	height = 3cm,
	xmin = 1,
	xmax = 200,
	ymin = 0,
	ymax = 1400,
	xtick = {1, 71, 101, 131, 200},
	scaled x ticks = false,
	cycle list name=yaf,
	y tick label style = {/pgf/number format/set thousands separator = {\,}},
	xticklabel = {$\scriptstyle\pgfmathprintnumber{\tick}$},
	yticklabel = {$\scriptstyle\pgfmathprintnumber{\tick}$}
	]

\addplot[yafcolor2, no markers] table[x expr = {\lineno + 1}, y index = 0, header = false] {toycost.dat};

\pgfplotsextra{\yafdrawaxis{1}{200}{0}{1400}}
\end{axis}
\end{tikzpicture}}

\end{center}
\caption{Toy sequence and the $L_2$ cost of a segmentation $[1, k - 1]$, $[k, 200]$ as a
function of $k$. 
In this paper we propose a necessary condition for a segmentation to be optimal.
This condition allows us to prune suboptimal segmentations, such as $[1, 100], [101, 200]$}
\label{fig:toy}
\end{figure}

Sequence values around $101$ have a particular characteristic which we can exploit
to speedup the optimization. In order to demonstrate this, let us define 
\[
\begin{split}
	X & = \big\{\frac{1}{101 - j} \sum_{i = j}^{100} D_i \mid 1 \leq j \leq 100\big\} \quad \text{and} \\
	Y & = \big\{\frac{1}{j - 100} \sum_{i = 101}^{j} D_i \mid 101 \leq j \leq 200\big\},
\end{split}
\]
that is, $X$ contains the averages from the right side of the first segment and $Y$
contains the averages from the left side of the second segment.
Let us define $r_1  = \min X$, $r_2 = \max X$, $l_1  = \min Y$, $l_2 = \max Y$.
We see that $r_1 \approx -1$, $r_2 \approx
1.8$, $l_1 \approx -1.8$, and $l_2 \approx 1$. That is, the intervals $[r_1,
r_2]$ and $[l_1, l_2]$ intersect.  We will show in such case that not only we can safely ignore the segmentation $[1, 100]$, $[101, 200]$
but we also will show that even if we augment the sequence with additional data points,
index $101$ will never be part of the optimal segmentation with 2 segments.
This pruning allows us to speedup the dynamic programming.

In general, if the extreme values of averages $[r_1, r_2]$ and $[l_1, l_2]$
computed from neighboring segments intersect, we know that
the segmentation is suboptimal.  On the other hand, the optimal segmentation with
$4$ segments for data in Figure~\ref{fig:toy:a}, $[1, 70], [71, 100], [101,
130], [131, 200]$, uses index $101$.  We do not violate our condition since the extreme values of
averages $[r_1, r_2]$ computed \emph{only} from the second segment and extreme
values of averages $[l_1, l_2]$ computed \emph{only} from the third segment no
longer intersect.

Using this idea, we will build an efficient pruning technique for segmenting 
data using one-dimensional log-linear models.  We empirically demonstrate that this
approach can reduce the computational load by several orders of magnitude
compared to the standard approach.

The remaining paper is organized as follows. In Section~\ref{sec:segmentation}
we give preliminary notation and define the segmentation problem. In
Section~\ref{sec:calculate} we give the key result which allows us to prune
segments. Sections~\ref{sec:onedim}--\ref{sec:border} are devoted to a
segmentation algorithm. We present our experiments in
Section~\ref{sec:experiments} and related work in Section~\ref{sec:related}.
Finally, we conclude the paper with discussion in
Section~\ref{sec:conclusions}.

\section{Segmentation for log-linear models}
\label{sec:segmentation}
In this section we give preliminaries and define the segmentation problem.

A \emph{sequence} $D = \enpr{D_1}{D_L}$ is a sequence of real vectors of length
$M$, $D_i \in \real^M$. A \emph{segment} $I = [b, e]$ consists of two integers
such that $1 \leq b \leq e \leq L$.
We will write $k \in I$ whenever $b \leq k \leq e$ for an integer $k$.
We define $D[b, e] = \enpr{D_b}{D_e}$ to
be the subsequence corresponding to that segment. A \emph{segmentation} $P$ is
a list of disjoint segments that cover $D$, that is, $P = \enpr{I_1}{I_K}$
such that the first segment $I_1$ starts
at $1$, the last segment $I_K$ ends at $\abs{D}$ and $I_k = [a, b]$ begins right after 
$I_{k - 1} = [c, d]$, that is $a = d + 1$.

Our goal is to find a segmentation that maximizes the likelihood of a
log-linear model of each individual segment. By log-linear models, also known as exponential family, we mean
models whose probability density function can we written as
\[
	p(x \mid r) = q(x)\exp\fpr{Z(r) + r^TS(x)},
\]
where $\funcdef{S}{\real^M}{\real^N}$ is a function mapping $x$ to a vector in
$\real^N$, $r \in \real^N$ is the parameter vector of the model, and $Z(r)$ is
the normalization constant. Many standard distributions are log-linear, for example,
Poisson, Gamma, Bernoulli, Binomial, and Gaussian (both with fixed or unknown variance). 
We will argue later in this section that using a Gaussian distribution with a fixed variance is equivalent to minimizing
$L_2$ error.

Assume that we are given a segmentation $P$ and for each segment $I \in P$,
we have a parameter vector $r_I$. Let us now consider the log-likelihood
\[
\begin{split}
	\log \prod_{I \in P} \prod_{k \in I} p(D_k \mid r_I) & = \sum_{I \in P} \sum_{k \in I} \log q(D_k) + Z(r_I) + r_I^TS(D_k)\\
	& = \sum_{k = 1}^{\abs{D}} \log q(D_k) +  \sum_{I \in P} \sum_{k \in I} Z(r_I) + r_I^TS(D_k),\\
\end{split}
\]
for this segmentation of $D$.
Note that the first term in the right-hand side does not depend on the
parameters nor on the segmentation. Consequently, we can ignore it.  In
addition, note that we can safely assume that $S(x) = x$. If this is not the case, we can always transform sequence $D$
into $D' = \enpr{S(D_1)}{S(D_L)}$. From now on we will assume that $S(x) = x$.

For notational simplicity, let us define  
\[
	\cs{D} = \sum_{i = 1}^{\abs{D}} D_i \quad\text{and}\quad \freq{D} = \frac{\cs{D}}{\abs{D}}
\]
to be the sum and the average of data points in $D$. If $D$ is clear from the
context, we will often write $\cs{i, j}$ and $\freq{i, j}$ to mean $\cs{D[i,
j]}$ and $\freq{D[i, j]}$.  As shorthand, we write $\freq{j}$ and $\cs{j}$ to mean
$\cs{1, j}$ and $\freq{1, j}$.

We define the score of a single segment given a parameter vector as
\[
	\score{D \mid r} = \abs{D}Z(r) + r^T\cs{D}\quad.
\]
We define the score for a segmentation $P$ as
\[
	\score{P; D} = \sum_{I \in P} \score{D[I]}, \text{ where } \score{D} = \sup_r \score{D \mid r},
\]
that is, $\score{P; D}$ is a sum of the optimal scores of individual segments.
We see that optimizing $\score{P; D}$ is equivalent to maximizing likelihood
of the log-linear model.

We are now ready to state our optimization problem.
\begin{problem}
Given a sequence $D$, a log-linear model, and an integer $K$, find a segmentation
$P$ with $K$ segments maximizing $\score{P; D}$. 
\end{problem}

\begin{example}
\label{ex:gaussian}
Let us now consider a Gaussian distribution with identity covariance matrix. This
distribution is log-linear since we can rewrite 
\[
(2\pi)^{-M/2}e^{-0.5\norm{x - \mu}^2} \quad\text{as}\quad
	e^{-0.5\norm{x}^2} (2\pi)^{-M/2} e^{-0.5\norm{\mu}^2 + \mu^Tx}\quad.
\]
The log-likelihood of a Gaussian distribution for a segmentation $P$ is
\[
\begin{split}
	&\log \prod_{I \in P} \prod_{x \in D[I]} (2\pi)^{-M/2} e^{-0.5\norm{x - \mu_I}^2}\\
	&\quad = \sum_{I \in P} \sum_{x \in D[I]} -M/2 \log 2\pi -0.5\norm{x - \mu_I}^2\\
	&\quad = -\abs{D}M/2 \log 2\pi -0.5  \sum_{I \in P} \sum_{x \in D[I]} \norm{x - \mu_I}^2\quad.\\
\end{split}
\]
The optimal value for $\mu_I$ is an average of data points in $D[I]$.
The first term of the right-hand side is constant while the second term is the $L_2$ error.
Consequently, selecting a segmentation that maximizes log-likelihood is equivalent to
finding a segmentation that minimizes the $L_2$ error, a typical choice for an error function.
\end{example}

The optimal segmentation can be found with a dynamic program~\citep{bellman:61:on}. In order
to see this, let $P = \enpr{I_1}{I_K}$ be the optimal segmentation with $K$
segments. Let $c$ be the last index of $I_{K - 1}$.  Then $\enpr{I_1}{I_{K -
1}}$ is the optimal segmentation for $D[1, c]$. We can find the optimal
segmentation with $K$ segments by first computing the optimal segmentation with
$K - 1$ segments for each $D[1, c]$ and then testing which segment of form $(c, \abs{D})$
we need to add to produce the optimal segmentation with $K$ segments. This leads
to an algorithm of time complexity $O(K\abs{D}^2)$.  The goal of this paper
is to provide an optimization of this dynamic program.

\section{Necessary Condition for Optimal Segmentation}
\label{sec:calculate}
In this section we give a key result of this paper. This result allows us
to prune candidates that will not be included in the optimal segmentation and hence speedup the 
dynamic program.

In order to do so, let $V$ be a set of vectors in $\real^N$.  We say that $V$
is a \emph{cover} if for any $y \in \real^N$, there is a $v \in V$ such that
$y^Tv \geq 0$. See Figure~\ref{fig:cover} for an example. Given two sequences
$D$ and $E$ we define $\diff{D, E}$ to be the \emph{difference set} for $D$ and $E$ as 
\[
	\diff{D, E}  = \set{\freq{D[k, \abs{D}]} - \freq{E[1, l]} \mid 1 \leq k \leq \abs{D}, 1 \leq l \leq \abs{E}}\quad.
\]

\begin{figure}[htb!]
\begin{center}
\hspace{\fill}
\subfigure[non-cover\label{fig:cover:a}]{
\begin{tikzpicture}[thick, > = latex]
\draw[step=0.5cm, black!20, ultra thin, dashed] (-1.4, -1.4) grid (1.4, 1.4);
\draw[gray] (-1.5, 0) -- (1.5, 0);
\draw[gray] (0, -1.5) -- (0, 1.5);
\begin{pgfonlayer}{background}
\slice{-60}{120}{yafcolor1!20}
\slice{40}{220}{yafcolor2!20}
\slice{40}{120}{yafcolor1!50!yafcolor2!30}
\end{pgfonlayer}{background}
\node[circle] at (canvas polar cs:angle=30, radius=1.4cm) (x1) {$x_1$};
\node[circle] at (canvas polar cs:angle=130, radius=1.4cm) (x2) {$x_2$};
\node[circle] at (canvas polar cs:angle=260, radius=1.4cm) (x3) {$u$};
\draw[->, yafcolor1] (0, 0) -- (x1);
\draw[->, yafcolor2] (0, 0) -- (x2);
\draw[->] (0, 0) -- (x3);
\end{tikzpicture}}
\hfill
\subfigure[cover\label{fig:cover:b}]{
\begin{tikzpicture}[thick, > = latex]
\draw[step=0.5cm, black!20, ultra thin, dashed] (-1.4, -1.4) grid (1.4, 1.4);
\draw[gray] (-1.5, 0) -- (1.5, 0);
\draw[gray] (0, -1.5) -- (0, 1.5);
\begin{pgfonlayer}{background}
\slice{-60}{120}{yafcolor1!20}
\slice{40}{220}{yafcolor2!20}
\slice{150}{330}{yafcolor3!20}
\slice{40}{120}{yafcolor1!50!yafcolor2!30}
\slice{150}{220}{yafcolor2!50!yafcolor3!30}
\slice{300}{330}{yafcolor1!50!yafcolor3!30}
\end{pgfonlayer}{background}
\node[circle] at (canvas polar cs:angle=30, radius=1.4cm) (x1) {$x_1$};
\node[circle] at (canvas polar cs:angle=130, radius=1.4cm) (x2) {$x_2$};
\node[circle] at (canvas polar cs:angle=240, radius=1.4cm) (x3) {$x_3$};
\draw[->, yafcolor1] (0, 0) -- (x1);
\draw[->, yafcolor2] (0, 0) -- (x2);
\draw[->, yafcolor3] (0, 0) -- (x3);
\end{tikzpicture}}
\hspace{\fill}

\end{center}
\caption{An example of a non-cover and a cover. In Figure~\ref{fig:cover:a} $\set{x_1, x_2}$ is not a cover since $u$
is outside the half-planes induced by $x_1$ and $x_2$. In Figure~\ref{fig:cover:b} $\set{x_1, x_2, x_3}$ is a cover}
\label{fig:cover}
\end{figure}
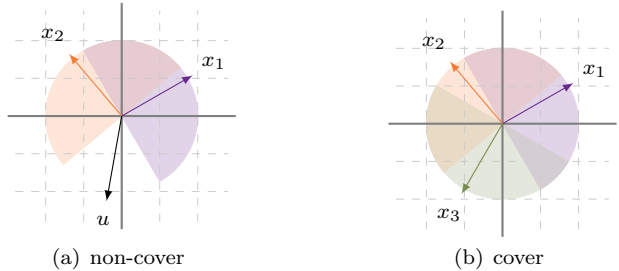

We are now ready to state the key result of the paper. For readability, we postpone the proof to Appendix~\ref{sec:appprune}.
\begin{theorem}
\label{thr:suff}
Let $P$ be a segmentation. There is a segmentation $P'$ such that $\score{P'}
\geq \score{P}$ and $\diff{D[I], D[J]}$ is not a cover for any two consecutive
segments $I$ and $J$ in $P'$.
\end{theorem}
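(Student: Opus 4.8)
The plan is to produce $P'$ as an extremal element of a finite family and then rule out cover pairs by a local exchange argument. I would let $\sfam{C}$ be the (finite, nonempty) family of all segmentations of $D$ with at most $\abs{P}$ segments, and choose $P' \in \sfam{C}$ that first maximizes $\score{\cdot}$, then among those minimizes the number of segments, and finally among those minimizes the sum of the right endpoints of its segments. Then $\score{P'} \geq \score{P}$ automatically, so it remains to show that $P'$ has no two consecutive segments whose difference set is a cover. Suppose, for contradiction, that $I = [b,c]$ and $J = [c+1,d]$ are consecutive in $P'$ and $\diff{D[I], D[J]}$ is a cover.

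The first ingredient is a geometric reformulation. Writing $X = \set{\freq{D[k,c]} \mid b \leq k \leq c}$ for the suffix averages of $I$ and $Y = \set{\freq{D[c+1,l]} \mid c+1 \leq l \leq d}$ for the prefix averages of $J$, a finite set $V$ is a cover precisely when it cannot be strictly separated from the origin, i.e.\ when $0 \in \operatorname{conv}(V)$; applying this to $V = \diff{D[I], D[J]} = X - Y$ and using $\operatorname{conv}(X - Y) = \operatorname{conv}(X) - \operatorname{conv}(Y)$ shows that $\diff{D[I], D[J]}$ is a cover iff $\operatorname{conv}(X) \cap \operatorname{conv}(Y) \neq \emptyset$; fix a point $p$ in this intersection. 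The second ingredient is an exchange inequality. Let $r_I$ and $r_J$ be the maximizers defining $\score{D[I]}$ and $\score{D[J]}$, and set $\psi(a) = \bigl(Z(r_J) - Z(r_I)\bigr) + (r_J - r_I)^T a$, an affine function of $a$. Scoring the two modified segments with the \emph{unchanged} parameters $r_I$ and $r_J$ shows that moving the border to $k-1$ for some $b < k \leq c$ (thereby reassigning the suffix $B = D[k,c]$ of $I$ to $J$) produces a segmentation of score at least $\score{P'} + \abs{B}\psi(\freq{B})$; symmetrically, moving the border rightward past a proper prefix $B'$ of $J$ gains at least $-\abs{B'}\psi(\freq{B'})$, and merging $I$ with $J$ gains at least $\abs{I}\psi(\freq{D[I]})$ (and also at least $-\abs{J}\psi(\freq{D[J]})$). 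All of these alternatives lie in $\sfam{C}$, so by maximality of $\score{P'}$ every such gain is $\leq 0$; hence $\psi \leq 0$ on $X$ and $\psi \geq 0$ on $Y$ (the two full-segment endpoints being handled by the merge bounds), and since $\psi$ is affine this extends to $\operatorname{conv}(X)$ and $\operatorname{conv}(Y)$, forcing $\psi(p) = 0$.

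It remains to turn $\psi(p) = 0$ into a contradiction. First-order optimality of $r_I$ gives $\nabla Z(r_I) = -\freq{D[I]}$, so $\psi(\freq{D[I]}) = Z(r_J) - Z(r_I) - \nabla Z(r_I)^T(r_J - r_I) \leq 0$ by concavity of $Z$ (the negative log-partition function), with equality iff $r_I = r_J$; symmetrically $\psi(\freq{D[J]}) \geq 0$. If $r_I = r_J$, then $\freq{D[I]} = \freq{D[J]}$, the optimal parameters of $I$, $J$, and $I \cup J$ all coincide, and hence merging $I$ and $J$ preserves the score while reducing the number of segments --- contradicting the choice of $P'$. If $r_I \neq r_J$, then $\psi(\freq{D[I]}) < 0$; writing $p = \sum_i \lambda_i x_i$ as a convex combination over $X$ with every $\lambda_i > 0$, the identity $0 = \psi(p) = \sum_i \lambda_i \psi(x_i)$ together with $\psi \leq 0$ on $X$ forces $\psi(x_i) = 0$ for all $i$, and since $\psi(\freq{D[I]}) < 0$ none of the $x_i$ is the full average; thus some proper suffix $B = D[k,c]$, $b < k \leq c$, has $\psi(\freq{B}) = 0$, so moving the border to $k-1$ preserves both the score and the number of segments yet strictly decreases the sum of right endpoints --- again contradicting the choice of $P'$. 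In either case we obtain a contradiction, so $P'$ has no consecutive pair with a cover difference set.

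I expect the main obstacle to be the handling of non-strict inequalities: the best available local move need not strictly improve the score, which is precisely why the argument has to be steered by the secondary and tertiary minimizations rather than by a one-step improvement, and why the degenerate case $r_I = r_J$ must be split off and disposed of by merging instead. A further, more routine, technical point is the regularity of the exponential family --- existence of the maximizers $r_I, r_J$ and strict concavity of $Z$, equivalently minimality of the representation --- which I would either assume outright or handle by passing to a minimal parametrization and a limiting argument.
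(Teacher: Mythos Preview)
Your proof is correct under the regularity assumption you flag (existence of the maximizers $r_I,r_J$), and it follows a genuinely different route from the paper. The paper first proves a local lemma (Theorem~\ref{thr:ascent}): whenever two consecutive segments form a cover, the shared border can be shifted either to the right with a \emph{strict} score gain or to the left with no loss; it then iterates, using this asymmetry to force termination. Crucially, the paper works with the $\sup$ via an $\epsilon$-approximation, never assuming that $r_I,r_J$ exist, and invokes the cover property only for the single direction $s-r$ rather than through convex hulls. Your argument trades the iteration for an extremal choice with a three-level lexicographic tiebreak and recasts ``cover'' as $\operatorname{conv}(X)\cap\operatorname{conv}(Y)\neq\emptyset$, which is geometrically cleaner but costs you the regularity hypothesis. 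Two minor points. First, the detour through strict concavity of $Z$ is unnecessary: once the exchange and merge bounds give $\psi\le 0$ on $X$, simply split on whether $\psi(\freq{D[I]})=0$ (then merging already ties the score with one fewer segment, contradicting segment-count minimality) or $\psi(\freq{D[I]})<0$ (then, as you argue, some \emph{proper} suffix has $\psi=0$ and shifting left contradicts the tertiary tiebreak); no first-order condition or minimality of the exponential family is needed. Second, because of the merge branch your $P'$ may have strictly fewer segments than $P$, whereas the paper's construction keeps exactly $\abs{P}$ segments throughout --- a harmless difference for the theorem as stated, but worth noting for the downstream application to the dynamic program.
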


\section{Segmentation for one-dimensional models}
\label{sec:onedim}
In the previous section we saw a necessary condition for optimal segmentation.
This involves checking whether the difference set of consecutive segments is
a cover. In this section and the next section we show that we can efficiently check this
condition if our linear model is one-dimensional, that is,
if data points $D_i$ are real numbers.

In order to show this, let $D$ be a sequence.  We define a \emph{left
interval} to be an interval
\[
	\intl{D} = \big(\min_{1 \leq i \leq \abs{D}} \freq{{1, i}}, \max_{1 \leq i \leq \abs{D}} \freq{{1, i}}\big)
\]
of extreme values of $\freq{1, i}$.
Similarly,
we define a \emph{right interval} to be 
\[
	\intr{D} = \big(\min_{1 \leq i \leq \abs{D}} \freq{{i, \abs{D}}}, \max_{1 \leq i \leq \abs{D}} \freq{{i, \abs{D}}}\big)\quad.
\]

We can now express the condition using these intervals.
\begin{theorem}
\label{thr:minmax}
Assume two sequences, $D$ and $E$ and let $S$ be a one-dimensional statistic.
Then $\diff{D, E}$ is a cover if and only if the intervals $\intr{D}$ and $\intl{E}$
intersect.
\end{theorem}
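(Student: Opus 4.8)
The plan is to reduce both directions of the equivalence to a pair of elementary inequalities between the four extreme averages defining the two intervals. Write $\intr{D} = [r_1, r_2]$ with $r_1 = \min_i \freq{D[i, \abs{D}]}$, $r_2 = \max_i \freq{D[i, \abs{D}]}$, and $\intl{E} = [l_1, l_2]$ with $l_1 = \min_i \freq{E[1, i]}$, $l_2 = \max_i \freq{E[1, i]}$. Since these are nonempty closed intervals on the line, $\intr{D} \cap \intl{E} \neq \emptyset$ holds precisely when $r_1 \leq l_2$ and $l_1 \leq r_2$, and the task is to show that this same pair of inequalities characterizes when $\diff{D, E}$ is a cover.

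First I would specialize the notion of a cover to the case $N = 1$. For a finite nonempty $V \subseteq \real$, evaluating the defining condition at $y = 1$ and at $y = -1$ shows that being a cover forces $\max V \geq 0$ and $\min V \leq 0$; conversely, if $\min V \leq 0 \leq \max V$, then for an arbitrary $y \in \real$ at least one of $y \cdot \max V$ or $y \cdot \min V$ is nonnegative, so the condition holds for every $y$. Hence $\diff{D, E}$ is a cover if and only if $\min \diff{D, E} \leq 0 \leq \max \diff{D, E}$.

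Next I would evaluate those two extrema. By definition $\diff{D, E} = \set{a - b \mid a \in A,\ b \in B}$, where $A = \set{\freq{D[k, \abs{D}]} \mid 1 \leq k \leq \abs{D}}$ is the set of right-hand averages of $D$ and $B = \set{\freq{E[1, l]} \mid 1 \leq l \leq \abs{E}}$ the set of left-hand averages of $E$; note $\min A = r_1$, $\max A = r_2$, $\min B = l_1$, $\max B = l_2$. A difference of two finite sets attains its minimum at $\min A - \max B$ and its maximum at $\max A - \min B$, so $\min \diff{D, E} = r_1 - l_2$ and $\max \diff{D, E} = r_2 - l_1$. Substituting into the previous paragraph, $\diff{D, E}$ is a cover iff $r_1 - l_2 \leq 0$ and $r_2 - l_1 \geq 0$, i.e., iff $r_1 \leq l_2$ and $l_1 \leq r_2$, which is exactly the intersection condition identified above.

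I do not anticipate a real obstacle; the argument is essentially bookkeeping. The two points that need care are the reduction of the cover condition to the test directions $y = \pm 1$ (so that the one-dimensional case genuinely collapses to a sign condition) and the degenerate situations in which $D$ or $E$ is constant and one of the intervals shrinks to a single point — these cause no trouble as long as $\intr{\cdot}$ and $\intl{\cdot}$ are read as closed intervals, consistent with the $[r_1, r_2]$, $[l_1, l_2]$ notation already used in the introduction.
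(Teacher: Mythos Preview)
Your proposal is correct and follows essentially the same approach as the paper: both reduce the one-dimensional cover condition to the existence of a nonpositive and a nonnegative element in $\diff{D,E}$, and then rewrite that as the pair of inequalities $r_1 \leq l_2$, $l_1 \leq r_2$ characterizing interval intersection. Your version is simply more explicit about why testing $y = \pm 1$ suffices and about the $\min$/$\max$ bookkeeping, whereas the paper compresses this into a single sentence.
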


\begin{proof}
Let $\intr{D} = (x, y)$ and $\intl{E} = (u, v)$.  $\diff{D, E}$ is a cover if
and only if there are $a$, $b$, $c$, and $d$ such that $\freq{D[a, \abs{D}]}
\leq \freq{E[1, b]}$  and  $\freq{D[c, \abs{D}]} \geq \freq{E[1, d]}$.  This
is equivalent to $x \leq v$ and $y \geq u$, which is equivalent to $\intr{D}$
and $\intl{E}$ intersecting.
\end{proof}

We can now use this result to design an efficient algorithm.  Assume that we
already have computed for each $j$ the optimal segmentation with $K - 1$ segments, say $P_j$
covering $D[1, j]$. We now want to find an optimal segmentation with $K$ segments 
covering $D[1, i]$. In order to do so we need to augment each $P_{j - 1}$ for $j \leq i$
with a segment $[j, i]$, and pick the optimal segmentation. Assume that
the intervals $\intl{D[j, i]}$ and $\intr{D[c, j - 1]}$ intersect,
where $c$ is the starting point of the last segment in $P_{j - 1}$.
Then Theorems~\ref{thr:suff}~and~\ref{thr:minmax} imply
that we can safely ignore the segmentation $P_{j - 1}$ augmented with $[j, i]$.
Moreover, if the intervals intersect when segmenting $D[1, i]$, they will also
intersect when segmenting $D[1, k]$ for $k > i$. Hence, as soon as $\intl{D[j, i]}$ and $\intr{D[c, j - 1]}$
intersect, we can ignore $j$ as a candidate for the starting point of the
last segment. We present the pseudo-code for this approach as Algorithm~\ref{alg:fast}.

\begin{algorithm}[htb!]
\caption{$\textsc{Segment}(s, r, D)$ builds the optimal $K$-segmentation for $D$ using optimal segmentations with $K - 1$ segments}
\Input{scores $s$ for optimal segmentation with $K - 1$ segments, corresponding right intervals $r$, sequence $D$}
\Output{scores $u$ for optimal $K$-segmentation, right intervals $v$ for optimal $K$-segmentation}
\label{alg:fast}

$C \define \emptyset$\;
\ForEach {$i = 1, \ldots, \abs{D}$} {
	add $i$ to $C$\;\nllabel{alg:fast:add}
	\ForEach {$j \in C$} {
		update $l(j)$ to be $\intl{D[j, i]}$\;
		\If {$r(j - 1)$ and $l(j)$ intersect} {
			delete $j$ from $C$\;\nllabel{alg:fast:delete}
		}
	}

	$c \define \arg \max_{j \in C}  s(j - 1) + \score{D[j, i]}$\;\nllabel{alg:fast:inner}
	$u(i) \define s(c - 1) + \score{D[c, i]}$\;
	$v(i) \define \intr{D[c, i]}$\;
}
\Return $u$, $v$\;
\end{algorithm}

Let us next analyze the time and memory complexity of Algorithm~\ref{alg:fast}.
Let $L$ be the maximal size of $C$.
It is easy to see that we can compute $\score{D[j,
i]}$ and $\intl{D[j, i]}$ in constant time by keeping and updating the sum
$\cs{j, i}$ for every $j \in C$. The only non-trivial part of
Algorithm~\ref{alg:fast} is computing the right interval $\intr{D[c, i]}$.
In the next section, we will show how to compute the right interval in
amortized $O(L)$ time, hence the execution time of the algorithm is in
$O(L\abs{D})$. Moreover, we will show that the total memory requirement for
computing the right interval is in $O(\abs{D})$ which will make the memory
usage of the algorithm $O(\abs{D})$.

\section{Computing the Right Interval}
\label{sec:border}
In this section we show how to compute the right interval, as needed in
Algorithm~\ref{alg:fast}. We will focus on how to compute the maximal value of the
right interval; we can compute the minimal value using exactly the same framework.

\subsection{Computing the Borders}

Our first goal, given a sequence $D$ and integer $i$, is to find $j$ such that
$\freq{j, i}$ is maximal.  Naturally, if we have to do so from scratch, we have
no other option but to test every $1 \leq j \leq i$. However, since segmentation
needs the maximal average for every $i$ we can use information from previous scans
to find the optimal $j$ more quickly.

We will now present the main results by~\citet{calders:07:mining} in which the authors
considered efficiently finding the maximal average from a stream of data
points. In the next section we will modify this approach to make it more memory-efficient.

Given a sequence $D$ we say that $1 \leq i \leq \abs{D}$ is a \emph{border}
if there is a (possibly empty) sequence $E$ such that if we define $F$ to be $D$ concatenated
with $E$, then
\[
	\freq{F[i, \abs{F}]} = \max_{1 \leq j \leq \abs{F}} \freq{F[j, \abs{F}]}\quad.
\]
We define $\borders{D}$ to be the sorted list of border points of $D$.

Let $D$ be a sequence. Further, Let $1 \leq i \leq j \leq \abs{D}$ and let $\enpr{b_1}{b_M} =
\borders{D[i, j]}$. Whenever $D$ is clear from the context,  we define $\borders{i, j}$ to be $\enpr{b_1 + i -
1}{b_M + i - 1}$. Further, we will write
$\borders{i}$ instead of $\borders{i, \abs{D}}$.

The following theorem states that a maximal average can be found by simply taking
the largest border.\!\footnote{\citet{calders:07:mining} deal only with binary sequences but we can
easily extend these results to the general case.}
\begin{theorem}[see~\citealp{calders:07:mining}]
\label{thr:bordmax}
Assume a sequence $D$. Let $j = \max \borders{D}$.
Then
\[
	\freq{j, \abs{D}} = \max_{1 \leq k \leq \abs{D}} \freq{k, \abs{D}}\quad.
\]
\end{theorem}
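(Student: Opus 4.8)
The plan is to argue by contradiction. Write $n = \abs{D}$ and $m = \max_{1 \le k \le n} \freq{k, n}$, and for sequences $A, B$ write $A\cdot B$ for their concatenation. First I would record the easy half: any index $k$ with $\freq{k, n} = m$ is a border, witnessed by the empty continuation $E$, since then $F = D$ and $\freq{F[k, \abs{F}]} = m$ is already the maximal suffix average of $F$. Hence $j = \max\borders{D}$ is at least as large as every maximizer of the suffix averages. It therefore suffices to prove $\freq{j, n} \ge m$, because $\freq{j, n}$ is itself one of the suffix averages of $D$ and so cannot exceed $m$; combining the two inequalities gives equality.

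Suppose, for contradiction, that $\freq{j, n} < m$, and fix an index $k^*$ with $\freq{k^*, n} = m$. Then $k^* \ne j$, and since $k^*$ is a border, $k^* \le j$, so in fact $k^* < j$. The argument rests on two ingredients. (i) As $j$ is a border, there is a sequence $E$ (possibly empty), of length $p$, such that with $F = D\cdot E$ the suffix average $\freq{F[j, \abs{F}]}$ is maximal among all suffix averages of $F$. (ii) Whenever a sequence splits as $A\cdot B$ with both $A$ and $B$ nonempty, $\freq{A\cdot B}$ lies (non-strictly) between $\freq{A}$ and $\freq{B}$, being a convex combination of them with positive weights; in particular if $\freq{A\cdot B}$ exceeds one of the two parts' averages, it is at most the other's.

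Now I would chain (ii) through three nested decompositions. From $D[k^*, n] = D[k^*, j-1]\cdot D[j, n]$ and $\freq{D[k^*,n]} = m > \freq{j, n}$, fact (ii) forces $\freq{D[k^*, j-1]} > m$. From $F[k^*, \abs{F}] = D[k^*, j-1]\cdot F[j, \abs{F}]$: were $\freq{F[j,\abs{F}]} < \freq{D[k^*, j-1]}$, then (ii) would give $\freq{F[k^*,\abs{F}]} > \freq{F[j,\abs{F}]}$, contradicting the maximality in (i); hence $\freq{F[j, \abs{F}]} \ge \freq{D[k^*, j-1]} > m$. Finally $F[j, \abs{F}] = D[j, n]\cdot E$: if $E$ were empty this says $\freq{F[j,\abs{F}]} = \freq{j, n} < m$, contradicting the previous line, so $E$ is nonempty; since $\freq{j, n} < m < \freq{F[j, \abs{F}]}$, fact (ii) applied to $D[j,n]\cdot E$ forces $\freq{E} > \freq{F[j, \abs{F}]}$. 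But $E$ is exactly the suffix $F[n+1, \abs{F}]$ of $F$, so $\freq{F[n+1, \abs{F}]} > \freq{F[j, \abs{F}]}$, again contradicting the maximality of $\freq{F[j, \abs{F}]}$ in (i). This contradiction shows $\freq{j, n} \ge m$, finishing the proof.

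The delicate point — and the main obstacle — is the idea of ingredient (i): one must recognize that the continuation $E$ witnessing that $j$ is a border can be turned against itself, i.e.\ that its own average is pinned \emph{below} $\freq{F[j,\abs{F}]}$, which combined with the convex-combination identity (ii) for the three decompositions yields the contradiction. The arithmetic is elementary, but it is tempting and messier to manipulate the cross-multiplied form
\[
\frac{\cs{D[j,n]} + \cs{E}}{(n-j+1)+p} \;\ge\; \frac{\cs{D[k^*,n]}+\cs{E}}{(n-k^*+1)+p}
\]
directly; routing everything through (ii) keeps it transparent. One should also note, as in the footnote to the statement, that nothing in this argument uses that $D$ is binary, so the same proof establishes Theorem~\ref{thr:bordmax} for arbitrary real one-dimensional sequences.
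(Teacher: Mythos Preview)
Your argument is correct. The three nested applications of the convex-combination fact are clean, and the contradiction via the suffix $E = F[n+1,\abs{F}]$ is exactly the right closing move. One tiny wording issue: in stating (ii) you say that if $\freq{A\cdot B}$ strictly exceeds one part's average then it is ``at most'' the other's; in fact it is strictly less, and you do use the strict inequality $\freq{D[k^*,j-1]} > m$ in the very next line, so you might as well state (ii) in its strict form.

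As for comparison with the paper: there is nothing to compare against. The paper does not prove Theorem~\ref{thr:bordmax}; it is quoted from \citet{calders:07:mining} and used as a black box. Your proof is therefore a genuine addition, and it has the virtue of working directly from the \emph{definition} of a border, without invoking the characterization in Theorem~\ref{thr:bordfreq}. An alternative, slightly shorter route would be to use Theorem~\ref{thr:bordfreq}: since $j$ is a border, for every $a < j$ one has $\freq{a, j-1} < \freq{j, \abs{D}}$ (take $b = \abs{D}$), and then convexity gives $\freq{a, \abs{D}} < \freq{j, \abs{D}}$; combined with your observation that every maximizer is a border and hence $\le j$, this forces the maximizer to equal $j$. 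Your direct argument avoids relying on that characterization, which is arguably cleaner if one wants the result to stand on its own.
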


We can describe the borders using the following theorem.

\begin{theorem}[see~\citealp{calders:07:mining}]
\label{thr:bordfreq}
An integer $i$ is a border for $D$ if and only if there are no $a$ and $b$, $a < i \leq b$ such
that $\freq{a, i - 1} \geq \freq{i, b}$.
\end{theorem}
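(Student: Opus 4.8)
The plan is to reduce everything to the elementary \emph{mediant} property of averages: if a sequence is split into a nonempty prefix followed by a suffix, its average is a convex combination of the averages of the two parts, so for such a split the statements ``$\freq{\text{prefix}}\le\freq{\text{whole}}$'', ``$\freq{\text{whole}}\le\freq{\text{suffix}}$'' and ``$\freq{\text{prefix}}\le\freq{\text{suffix}}$'' are mutually equivalent, and likewise for the strict versions. Throughout, $E$ ranges over possibly empty real sequences, $F$ denotes $D$ with $E$ appended, $\abs F=\abs D+\abs E$, and I use that a constant $E$ realizes any length $\abs E\ge0$ with any prescribed sum. To compare a suffix $F[j,\abs F]$ with the reference suffix $F[i,\abs F]$, I split whichever of the two is longer at $\min(i,j)$; this rewrites each inequality ``$\freq{F[j,\abs F]}\le\freq{F[i,\abs F]}$'' as an inequality between $\freq{F[i,\abs F]}$ and an average of a block internal to $D$, which stays fixed as $E$ varies.

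\textbf{Only if} (a border admits no such pair). I prove the contrapositive. Assume $1\le a<i\le b\le\abs D$ with $\freq{a,i-1}\ge\freq{i,b}$; write $x=\freq{a,i-1}$, $y=\freq{i,b}$ and fix any $F$. Splitting $F[i,\abs F]$ after index $b$, its average is a convex combination of $y$ and $z:=\freq{F[b+1,\abs F]}$ (read $z=y=\freq{F[i,\abs F]}$ when that suffix is empty, which forces $F=D$). A two-case split then finishes it: if $z>y$ then $\freq{F[b+1,\abs F]}=z>\freq{F[i,\abs F]}$, so the suffix starting at $b+1$ strictly beats the one starting at $i$; if $z\le y$ then $\freq{F[i,\abs F]}\le y\le x$, and splitting $F[a,\abs F]$ after index $i-1$ gives $\freq{F[a,\abs F]}\ge\freq{F[i,\abs F]}$, so the suffix starting at $a<i$ is no worse than the one at $i$. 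In either case $i$ cannot be the smallest index at which $\freq{F[j,\abs F]}$ attains its maximum, and since $F$ was arbitrary, $i$ is not a border.

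\textbf{If} (no such pair implies a border). Assume $\freq{a,i-1}<\freq{i,b}$ whenever $1\le a<i\le b\le\abs D$. Put $A=\max_{1\le a<i}\freq{a,i-1}$ (or $-\infty$ if $i=1$), $w=\freq{i,\abs D}$, and $B=\min_{i\le b\le\abs D-1}\freq{i,b}$ (or $+\infty$ if $i=\abs D$); taking $b=\abs D$ gives $A<w$ and taking $b<\abs D$ gives $A<B$. Now append a single value: let $E=(c)$ with $c$ to be chosen and $n=\abs D-i+1$, so $\freq{F[i,\abs F]}=(wn+c)/(n+1)$ is strictly increasing in $c$ and equals $w$ at $c=w$. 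By the mediant property (split $F[j,\abs F]$ at $i$ for $j<i$, at $j$ for $i<j\le\abs D$; the only other suffix is $F[\abs F,\abs F]=(c)$), the requirement $\freq{F[j,\abs F]}\le\freq{F[i,\abs F]}$ for all $j$ becomes the three conditions $\freq{F[i,\abs F]}\ge A$, $\freq{F[i,\abs F]}\le B$, and $c\le\freq{F[i,\abs F]}$ (this last equivalent to $c\le w$), each a half-line for $c$; the inequalities $A<w$ and $A<B$ force these half-lines to overlap in a nonempty interval containing values with $\freq{F[i,\abs F]}>A$. Any such $c$ makes $i$ the smallest index maximizing $\freq{F[j,\abs F]}$, so $i$ is a border.

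The ``only if'' half is routine once the mediant property is isolated; the real content is the ``if'' half — recognizing that appending a \emph{single} value already suffices, and checking that the three constraints on $c$ are simultaneously satisfiable precisely because of $A<w$ and $A<B$. I would also watch the boundary cases $i=1$ and $i=\abs D$ (where $A$ or $B$ degenerates) and the handling of ties, which is why ``border'' should be read as ``$i$ can be made the smallest index attaining $\max_j\freq{F[j,\abs F]}$''; without this reading the statement already fails on $D=(1,1)$, where both indices are maximizers yet the stated condition excludes $i=2$.
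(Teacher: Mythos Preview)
The paper does not prove this theorem: it is attributed to Calders et al.\ (2007) with no argument given in the text or the appendix, so there is nothing in the paper to compare your proof against.

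Evaluated on its own, your argument is essentially sound. The mediant reduction is the right tool; the case split in the ``only if'' direction and the single-value extension in the ``if'' direction both go through, including the boundary cases $i=1$ and $i=\abs{D}$. Your closing remark is also correct and worth emphasizing: with the paper's literal definition of border (merely attaining the maximum over suffixes), the statement fails on $D=(1,1)$ at $i=2$, exactly as you say. Reading ``border'' as ``can be made the \emph{smallest} maximizer'' repairs both directions and is consistent with how the paper actually handles ties --- the while-loop in Algorithm~\ref{alg:border} deletes the later of two tied borders, and Example~\ref{ex:border:1} excludes index $3$ on the basis of an equality. So under that reading your proof is correct, and along the way you have caught a genuine imprecision in the paper's formulation.
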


\begin{example}
\label{ex:border:1}
Assume that we are given a sequence $D = \pr{2, 0, 1, 2, 1, 1, 9, 2, 5, 0}$, and that $S(x) = x$.
According to Theorem~\ref{thr:bordfreq}, index $3 \notin \borders{D}$, since $\freq{1, 2} = 1 = \freq{3, 3}$.
The borders are $\pr{1, 4, 7} = \borders{D}$.
\end{example}

Our next step is to revise the algorithm given by~\citet{calders:07:mining} for constructing
$\borders{1, i}$ from $\borders{1, i - 1}$. The key idea for update
is given in the following theorem.

\begin{theorem}[see~\citealp{calders:07:mining}]
\label{thr:update}
Let us assume a list of borders
$\enpr{b_1}{b_M} = \borders{1, i - 1}$.
Define $b_{M + 1} = i$. Define $N$, $2 \leq N \leq M + 1$, to be the maximal integer such that
$\freq{b_{N - 1}, i} < \freq{b_N, i}$. If such $N$ does not exist, we set $N = 1$.
Then, $\enpr{b_1}{b_N} = \borders{1, i}$.
\end{theorem}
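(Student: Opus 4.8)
The plan is to stay inside the framework of Theorems~\ref{thr:bordmax} and~\ref{thr:bordfreq}, using repeatedly one elementary fact about averages: for indices $p \le q < r$ we have $\freq{p, r} = \lambda \freq{p, q} + (1 - \lambda)\freq{q + 1, r}$ with $\lambda = \frac{q - p + 1}{r - p + 1} \in (0, 1)$, so $\freq{p, r} \ge \freq{q + 1, r}$ holds exactly when $\freq{p, q} \ge \freq{q + 1, r}$, and likewise with ``$<$'' or with the two pieces interchanged. Call $(a, b')$ a \emph{blocker} for $c$ if $a < c \le b'$ and $\freq{a, c - 1} \ge \freq{c, b'}$; by Theorem~\ref{thr:bordfreq}, $c \in \borders{1, m}$ iff $c$ has no blocker inside $D[1, m]$. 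I first record that borders are hereditary under prefixes: if $b \in \borders{1, m}$, then $\borders{1, b - 1} = \borders{1, m} \cap \set{1, \ldots, b - 1}$. The inclusion $\supseteq$ is immediate, since a blocker for $c < b$ with $b' \le b - 1$ is also a blocker inside $D[1, m]$. For $\subseteq$, if $c \in \borders{1, b - 1}$ had a blocker $(a, b')$ inside $D[1, m]$, then $b' \ge b$, and Theorem~\ref{thr:bordfreq} applied to the border $b$ with the pair $(c, b')$ gives $\freq{c, b - 1} < \freq{b, b'}$, hence $\freq{c, b'} > \freq{c, b - 1}$ by the elementary fact, so $(a, b - 1)$ is a blocker for $c$ inside $D[1, b - 1]$, a contradiction. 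Writing $\enpr{b_1}{b_M} = \borders{1, i - 1}$, heredity gives $\borders{1, b_j - 1} = \set{b_1, \ldots, b_{j - 1}}$, so by Theorem~\ref{thr:bordmax} the maximum of $\freq{a, b_j - 1}$ over $a < b_j$ is attained at $a = b_{j - 1}$.

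Set $g_k = \freq{b_k, i}$ for $1 \le k \le M + 1$, with the convention $b_{M + 1} = i$, so $g_{M + 1} = D_i$. The structural heart of the proof is a unimodality claim: if $g_{k - 1} \ge g_k$ for some $2 \le k \le M$, then $g_k > g_{k + 1}$. (Geometrically, the $g_k$ are the slopes from the lower-convex-hull vertices of the prefix sums of $D[1, i - 1]$ to the new point, so they first rise and then fall.) To prove it, $g_{k - 1} \ge g_k$ gives $\freq{b_{k - 1}, b_k - 1} \ge g_k$ by the elementary fact; since $b_k \in \borders{1, i - 1}$ has no blocker, Theorem~\ref{thr:bordfreq} with the pair $a = b_{k - 1}$, $b' = b_{k + 1} - 1 \le i - 1$ gives $\freq{b_{k - 1}, b_k - 1} < \freq{b_k, b_{k + 1} - 1}$, so $\freq{b_k, b_{k + 1} - 1} > g_k = \freq{b_k, i}$; viewing $\freq{b_k, i}$ as a combination of $\freq{b_k, b_{k + 1} - 1}$ and $\freq{b_{k + 1}, i} = g_{k + 1}$ and applying the elementary fact once more forces $g_{k + 1} < g_k$. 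Hence $\set{k : 2 \le k \le M + 1, \ g_{k - 1} < g_k}$ is a (possibly empty) initial segment of $\set{2, \ldots, M + 1}$; it is exactly $\set{2, \ldots, N}$ for $N$ as in the statement, so $g_1 < \cdots < g_N$ while $g_{j - 1} \ge g_j$ whenever $N < j \le M + 1$.

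The rest is bookkeeping. \textbf{(1)} $\borders{1, i} \subseteq \set{b_1, \ldots, b_{M + 1}}$: a border $c < i$ of $D[1, i]$ has no blocker inside $D[1, i - 1]$ either, so $c \in \borders{1, i - 1}$; and $c = i = b_{M + 1}$. \textbf{(2)} $b_j \in \borders{1, i}$ for $j \le N$: $b_1 = 1$ has no blocker vacuously; for $2 \le j \le \min(N, M)$, any blocker for $b_j$ inside $D[1, i]$ must have $b' = i$ (all others are excluded by $b_j \in \borders{1, i - 1}$), but $\max_{a < b_j} \freq{a, b_j - 1} = \freq{b_{j - 1}, b_j - 1} < \freq{b_j, i}$ (the inequality from $g_{j - 1} < g_j$ via the elementary fact), so no such blocker exists; finally, if $j = M + 1$ -- which forces $N = M + 1$ -- then $g_M < g_{M + 1} = D_i$ together with $\max_{a < i} \freq{a, i - 1} = \freq{b_M, i - 1}$ (Theorem~\ref{thr:bordmax}) and the elementary fact yields $\max_{a < i}\freq{a, i - 1} < D_i$, so $i$ has no blocker. \textbf{(3)} $b_j \notin \borders{1, i}$ for $N < j \le M + 1$: here $g_{j - 1} \ge g_j$, so $\freq{b_{j - 1}, b_j - 1} \ge \freq{b_j, i}$ by the elementary fact, i.e.\ $(b_{j - 1}, i)$ is a blocker for $b_j$. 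Combining (1)--(3) gives $\borders{1, i} = \set{b_1, \ldots, b_N}$.

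The main obstacle I anticipate is step \textbf{(2)}: it does not suffice that the incoming average from the previous border stays below $\freq{b_j, i}$; one needs the \emph{largest} incoming average at $b_j$ to do so, which is precisely what the prefix-heredity lemma together with Theorem~\ref{thr:bordmax} delivers. The other delicate point is the unimodality claim, together with the observation that it -- rather than a naive monotonicity of the $g_k$ along the list of borders -- is what makes the cut-off index $N$ correspond to an initial segment. The boundary index $j = M + 1$ and the degenerate case $b_j = b_{j - 1} + 1$ (where the elementary fact is used with $p = q$) require their own brief checks but present no difficulty.
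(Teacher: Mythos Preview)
The paper does not supply its own proof of this theorem; it is quoted from \citet{calders:07:mining}, and the surrounding text explicitly says ``The correctness of \textsc{Update} is given by~\citet{calders:07:mining}.'' So there is nothing to compare against.

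That said, your argument is correct and self-contained within the paper's framework. The prefix-heredity lemma is sound (the only subtle direction, $\subseteq$, is handled cleanly by pushing a blocker with right endpoint $\geq b$ down to right endpoint $b-1$ via Theorem~\ref{thr:bordfreq}). The unimodality step is the crux and is proved correctly: from $g_{k-1}\geq g_k$ you extract $\freq{b_{k-1},b_k-1}\geq g_k$, then use that $b_k$ is a border of $D[1,i-1]$ with the witness pair $(b_{k-1},\,b_{k+1}-1)$ to force $\freq{b_k,b_{k+1}-1}>g_k$, and one more averaging gives $g_{k+1}<g_k$; the restriction $k\leq M$ is exactly what is needed so that $b_{k+1}-1\leq i-1$. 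The bookkeeping (1)--(3) is routine once unimodality and heredity are in place; in particular your worry about step~(2) is well placed and correctly resolved, since heredity plus Theorem~\ref{thr:bordmax} identify $\freq{b_{j-1},b_j-1}$ as the \emph{maximum} incoming average at $b_j$, not merely one candidate. The edge cases $N=1$, $j=M+1$, and $b_j=b_{j-1}+1$ are all handled.
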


The update algorithm (given as Algorithm~\ref{alg:border}) starts with the
previous borders $\enpr{b_1}{b_M} = \borders{1, i - 1}$ and adds $i$ as a
border.  Then the algorithm tests whether the average of the second last
border is larger than the average of the last border. If so, then the
condition in Theorem~\ref{thr:update} is violated, and we remove the
last border and repeat the test. The correctness of \textsc{Update} is given
by~\citet{calders:07:mining}. Note that we can compute the needed averages
in constant time, for example, by precalculating a sequence $\enpr{\cs{1}}{\cs{\abs{D}}}$.

\begin{algorithm}
\caption{\textsc{Update}, updates $\borders{1, i - 1}$ to $\borders{1, i}$}
\Input{borders $\enpr{b_1}{b_M} = \borders{1, i - 1}$}
\Output{updated borders $\borders{1, i}$}
\label{alg:border}

$b_{M + 1} \define i$\;
$M \define M + 1$\;

\lWhile {$M > 1$ \AND $\freq{b_{M - 1}, i} \geq \freq{b_M, i}$} {
	$M \define M - 1$
}

\Return $\enpr{b_1}{b_M}$\;

\end{algorithm}

\subsection{Computing Borders Simultaneously}
We can use borders to discover the right interval for a single segment.
However, recall that in Algorithm~\ref{alg:fast} we need to be able to compute the right
interval for any $D[c, i]$, where $c \in C$ is the current set of candidates
for a segment.  A na\"{i}ve approach would be to compute borders separately for
each $D[c, i]$. This leads to $O(L\abs{D})$ memory and time consumption,
where $L$ is the maximum size of $C$ during evaluation of \textsc{Segment}.  Here, we
will modify the border update algorithm such that its total memory consumption is
$O(\abs{D})$.  This will guarantee that the memory consumption of \textsc{Segment}
is $O(\abs{D})$.

\begin{example}
\label{ex:border:2}
Let us continue Example~\ref{ex:border:1}. We have $\borders{1, 10} = (1, 4, 7)$,
$\borders{3, 10} = (3, 4, 7)$, $\borders{8, 10} = (8, 9)$, and $\borders{10, 10} = (10)$.
Note that $\borders{1, 10}$ and $\borders{3, 10}$ have a common tail sequence, namely, $(4, 7)$.
We can generalize this observation.
\end{example}

The following key result states that when two border lists, say $\borders{i}$
and $\borders{j}$ share a common border, the subsequent borders are equivalent.

\begin{theorem}
\label{thr:tree}
Let $D$ be a sequence and let $1 \leq i, j \leq \abs{D}$ be two indices.
Assume that $a \in \borders{i} \cap \borders{j}$.
Let $b \geq a$. Then $b \in \borders{i}$ if and only if $b \in \borders{j}$.
\end{theorem}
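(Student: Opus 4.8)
The plan is to reduce the claim entirely to the combinatorial characterization of borders in Theorem~\ref{thr:bordfreq}. By symmetry I may assume $i \le j$; since $a \in \borders{j}$ forces $a \ge j$ and $b \ge a$, the index ranges $[i,\abs{D}]$ and $[j,\abs{D}]$ both contain $b$, so the two membership statements are well-posed (and if $b > \abs{D}$ the equivalence is vacuous). Unwinding Theorem~\ref{thr:bordfreq} applied to the suffixes $D[i,\abs{D}]$ and $D[j,\abs{D}]$, and translating border indices back to $D$, the statement to prove becomes: there is no pair $(c,d)$ with $i \le c < b \le d \le \abs{D}$ and $\freq{c,b-1} \ge \freq{b,d}$ if and only if the analogous statement holds with $i$ replaced by $j$.

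One direction is immediate. Since $i \le j$, any pair $(c,d)$ witnessing $b \notin \borders{j}$ satisfies $j \le c < b \le d \le \abs{D}$, hence also $i \le c$, so it witnesses $b \notin \borders{i}$; contrapositively $b \in \borders{i} \Rightarrow b \in \borders{j}$.

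For the converse I would argue by contradiction. Suppose $b \in \borders{j}$ but $b \notin \borders{i}$, so there is a pair $(c,d)$ with $i \le c < b \le d \le \abs{D}$ and $\freq{c,b-1} \ge \freq{b,d}$. Since $b \in \borders{j}$, we cannot have $c \ge j$, so $i \le c < j \le a \le b$; in particular $c < a$. The idea is to split the block $[c,b-1]$ at the common border $a$ and use $a \in \borders{i}$ to control its left part $[c,a-1]$. If $a = b$ this is degenerate: $[c,b-1] = [c,a-1]$, and $\freq{c,a-1} = \freq{c,b-1} \ge \freq{b,d} = \freq{a,d}$ together with $i \le c < a \le d \le \abs{D}$ contradicts $a \in \borders{i}$ through Theorem~\ref{thr:bordfreq}. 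If $a < b$, both $[c,a-1]$ and $[a,b-1]$ are nonempty; applying Theorem~\ref{thr:bordfreq} to $a \in \borders{i}$ with the pair $(c,b-1)$ gives $\freq{c,a-1} < \freq{a,b-1}$, and since $\freq{c,b-1}$ is the weighted average of $\freq{c,a-1}$ and $\freq{a,b-1}$ it lies strictly between them, so $\freq{c,b-1} < \freq{a,b-1}$. Hence $\freq{a,b-1} > \freq{c,b-1} \ge \freq{b,d}$, and now $(a,d)$ is a pair with $j \le a < b \le d \le \abs{D}$ and $\freq{a,b-1} \ge \freq{b,d}$, contradicting $b \in \borders{j}$ via Theorem~\ref{thr:bordfreq}. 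This forces $b \in \borders{i}$ and completes the equivalence.

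The only genuinely non-routine ingredient is the last case: the elementary fact that the average of a concatenation of two nonempty blocks lies strictly between the two block averages whenever those averages differ. That is exactly where the hypothesis $a \in \borders{i}$ enters — it guarantees the left block $[c,a-1]$ has the strictly smaller average — and it is what lets us replace the offending left endpoint $c$ by $a$ without decreasing the average, turning a violation of the border condition at $i$ into one at $j$. Everything else is bookkeeping with the inequalities from Theorem~\ref{thr:bordfreq} and a careful translation of indices between the suffix $D[i,\abs{D}]$ and $D$ itself.
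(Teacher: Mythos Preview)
Your proof is correct, but it takes a genuinely different route from the paper. The paper proves the theorem by induction on $k$, showing that the sets $\{b \in \borders{i,k} \mid b \geq a\}$ and $\{b \in \borders{j,k} \mid b \geq a\}$ coincide for every $k$ from $a$ up to $\abs{D}$; the induction step appeals to the update rule of Theorem~\ref{thr:update}, arguing that since the two border lists agree from $a$ onward at stage $k-1$, the \textsc{Update} procedure will prune exactly the same tail in both. Your argument instead works directly with the static characterization of Theorem~\ref{thr:bordfreq}: the easy direction is the same monotonicity observation that the paper later isolates as Theorem~\ref{thr:share}, and for the hard direction you exploit $a \in \borders{i}$ to show that any violating pair $(c,d)$ with $c < j$ can be replaced by the pair $(a,d)$ via the weighted-average inequality $\freq{c,a-1} < \freq{a,b-1}$, pushing the violation into the range $[j,\abs{D}]$ and contradicting $b \in \borders{j}$. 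Your approach is more self-contained---it does not rely on Theorem~\ref{thr:update} at all---and makes the role of the shared border $a$ very transparent; the paper's inductive proof, on the other hand, meshes more naturally with the incremental, algorithmic viewpoint of the surrounding section on updating border trees.
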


\begin{proof}
We will show that
$\set{b \in \borders{i, k} \mid b \geq a} = \set{b \in \borders{j, k} \mid b \geq a}$
using induction over $k$. The result follows by setting $k = \abs{D}$.

Since $\set{b \in \borders{i, a} \mid b \geq a}  =
\set{a} = \set{b \in \borders{j, a} \mid b \geq a}$, the first $k = a$ step follows.

Assume that the result holds for $k - 1$. Let $\enpr{b_1}{b_M} = \borders{i, k - 1}$ 
and $\enpr{c_1}{c_{K}} = \borders{j, k - 1}$. Let also $b_{M + 1} = c_{K + 1} = k$.
Let $x$ and $y$ be such that $b_x = a = c_y$.

Theorem~\ref{thr:update} states that there is an integer $N$ such that $\enpr{b_1}{b_N} = \borders{i, k}$
and an integer $L$ such that $\enpr{c_1}{c_L} = \borders{j, k}$.
Since $a \in \borders{i}$, we must have $N \geq x$ and similarly
$L \geq y$. 
Note that, by the induction assumption, we have $\enpr{b_x}{b_{M + 1}} = \enpr{c_y}{c_{K + 1}}$.
This implies that \textsc{update} will process exactly the same input, and deletes
exactly the same number of entries, that is, implies that $M - y \leq N - x$. This proves
the induction step.
\end{proof}

This theorem allows us to group border lists into a tree.  Let $D$ be a
sequence and let $C$ be a set of indices. We define a border tree $T =
\btree{D, C}$ as follows: The non-root nodes of the tree consists of the borders
from $\borders{c}$ for each $c \in C$, that is,
\[
	V(T) = \set{b \mid b \in \borders{c} \text{ for some } c \in C}\quad.
\]
There is an edge from a node $m$ to a node $n$ if and only if there is $c \in
C$ such that $\enpr{b_1}{b_M} = \borders{c}$, $n = b_{j}$ and $m = b_{j + 1}$.
Note that this is well-defined
since Theorem~\ref{thr:tree} states that if we have a node $n$, essentially a
border, shared by several border lists, then each border list will have the
exactly same next border, which is represented by the parent of $n$. 
Finally, the last border from each $\borders{c}$ is a child of a root, which we will denote by $r$.
Note that, for each $c \in C$, a path from $c$ to $r$ in $T$ is equal to
$\enpr{c = b_1}{b_M, r}$, where $\enpr{b_1}{b_M} = \borders{c}$.

Given a node $a$ in $\btree{D, C}$ we write $\children{a}$ to be the child
nodes of $a$. We assume that $\btree{D, C}$ is constructed so that the children
are ordered from smallest to largest. In order to be able to modify the tree quickly,
we store the tree structure as follows. Each node can have 3 pointers at most:
a pointer to a right sibling, a pointer to a left sibling or to the parent, if
there is no left sibling, and a pointer to the first child, see
Figure~\ref{fig:bt:a} as an example.

\begin{figure}
\tikzstyle{nextedge} = [yafcolor2]
\tikzstyle{prevedge} = [yafcolor5]
\tikzstyle{childedge} = [yafcolor3]

\begin{center}
\subfigure[current border tree\label{fig:bt:a}]{
\begin{minipage}[b]{3.5cm}
\begin{center}
\begin{tikzpicture}[>=latex, thick]

\node[inner sep = 1pt] (cl) {\scriptsize child};
\node[left=15pt of cl.west] {} edge[childedge, ->] (cl.west);
\node[inner sep = 1pt, below = 1pt of cl.south west, anchor = north west] (nl) {\scriptsize next};
\node[left=15pt of nl.west] {} edge[nextedge, ->] (nl.west);
\node[inner sep = 1pt, below = 1pt of nl.south west, anchor = north west] (pl) {\scriptsize prev};
\node[left=15pt of pl.west] {} edge[prevedge, ->] (pl.west);

\path[edge from parent/.style={}, level distance = 7mm, sibling distance = 8mm, inner sep = 1pt]
node[anchor  = north east] (root) at (-1, -0.5) {r}
	child {node (n7) {7}
		child {node (n4) {4}
			child {node (n1) {1}}
			child {node (n3) {3}}
		}
	}
	child {node (n9) {9}
		child {node (n8) {8}}
	}
	child {node (n10) {10}
	};

\draw[->, bend left = 15, nextedge] (n7) edge (n9);
\draw[->, bend left = 15, nextedge] (n9) edge (n10);
\draw[->, bend left = 15, nextedge] (n1) edge (n3);

\draw[->, bend left = 15, prevedge] (n9) edge (n7);
\draw[->, bend left = 15, prevedge] (n10) edge (n9);
\draw[->, bend left = 15, prevedge] (n3) edge (n1);
\draw[->, bend left = 15, prevedge] (n1) edge (n4);
\draw[->, bend left = 15, prevedge] (n4) edge (n7);
\draw[->, bend left = 15, prevedge] (n7) edge (root);
\draw[->, bend left = 15, prevedge] (n8) edge (n9);

\draw[->, bend left = 15, childedge] (n4) edge (n1);
\draw[->, bend left = 15, childedge] (n7) edge (n4);
\draw[->, bend left = 15, childedge] (root) edge (n7);
\draw[->, bend left = 15, childedge] (n9) edge (n8);
\end{tikzpicture}
\end{center}
\end{minipage}}
\subfigure[adding 11\label{fig:bt:b}]{
\begin{minipage}[b]{3.5cm}
\begin{center}
\begin{tikzpicture}[>=latex, thick]
\path[edge from parent/.style={}, level distance = 7mm, sibling distance = 8mm, inner sep = 1pt]
node (root) {r}
	child {node (n11) {11}
	child {node (n7) {7}
		child {node (n4) {4}
			child {node (n1) {1}}
			child {node (n3) {3}}
		}
	}
	child {node (n9) {9}
		child {node (n8) {8}}
	}
	child {node (n10) {10}
	}
	};

\draw[->, bend left = 15, nextedge] (n7) edge (n9);
\draw[->, bend left = 15, nextedge] (n9) edge (n10);
\draw[->, bend left = 15, nextedge] (n1) edge (n3);

\draw[->, bend left = 15, prevedge] (n9) edge (n7);
\draw[->, bend left = 15, prevedge] (n10) edge (n9);
\draw[->, bend left = 15, prevedge] (n3) edge (n1);
\draw[->, bend left = 15, prevedge] (n1) edge (n4);
\draw[->, bend left = 15, prevedge] (n4) edge (n7);
\draw[->, bend left = 15, prevedge] (n7) edge (n11);
\draw[->, bend left = 15, prevedge] (n8) edge (n9);
\draw[->, bend left = 15, prevedge] (n11) edge (root);

\draw[->, bend left = 15, childedge] (n4) edge (n1);
\draw[->, bend left = 15, childedge] (n7) edge (n4);
\draw[->, bend left = 15, childedge] (n11) edge (n7);
\draw[->, bend left = 15, childedge] (n9) edge (n8);
\draw[->, bend left = 15, childedge] (root) edge (n11);
\end{tikzpicture}
\end{center}
\end{minipage}}
\subfigure[splitting 7 from 11\label{fig:bt:c}]{
\begin{minipage}[b]{3.5cm}
\begin{center}
\begin{tikzpicture}[>=latex, thick]
\path[edge from parent/.style={}, level distance = 7mm, sibling distance = 8mm, inner sep = 1pt]
node (root) {r}
	child {node (n7) {7}
		child {node (n4) {4}
			child {node (n1) {1}}
			child {node (n3) {3}}
		}
	}
	child {node (n11) {11}
		child[missing]
		child {node (n9) {9}
			child {node (n8) {8}}
		}
		child {node (n10) {10}}
	};

\draw[->, bend left = 15, nextedge] (n7) edge (n11);
\draw[->, bend left = 15, nextedge] (n9) edge (n10);
\draw[->, bend left = 15, nextedge] (n1) edge (n3);

\draw[->, bend left = 15, prevedge] (n11) edge (n7);
\draw[->, bend left = 15, prevedge] (n10) edge (n9);
\draw[->, bend left = 15, prevedge] (n3) edge (n1);
\draw[->, bend left = 15, prevedge] (n1) edge (n4);
\draw[->, bend left = 15, prevedge] (n4) edge (n7);
\draw[->, bend left = 15, prevedge] (n9) edge (n11);
\draw[->, bend left = 15, prevedge] (n8) edge (n9);
\draw[->, bend left = 15, prevedge] (n7) edge (root);

\draw[->, bend left = 15, childedge] (n4) edge (n1);
\draw[->, bend left = 15, childedge] (n7) edge (n4);
\draw[->, bend left = 15, childedge] (n11) edge (n9);
\draw[->, bend left = 15, childedge] (n9) edge (n8);
\draw[->, bend left = 15, childedge] (root) edge (n7);
\end{tikzpicture}
\end{center}
\end{minipage}}

\subfigure[splitting 9 from 11\label{fig:bt:d}]{
\begin{minipage}[b]{3.5cm}
\begin{center}
\begin{tikzpicture}[>=latex, thick]
\path[edge from parent/.style={}, level distance = 7mm, sibling distance = 8mm, inner sep = 1pt]
node (root) {r}
	child {node (n7) {7}
		child {node (n4) {4}
			child {node (n1) {1}}
			child {node (n3) {3}}
		}
	}
	child {node (n9) {9}
		child {node (n8) {8}}
	}
	child {node (n11) {11}
		child {node (n10) {10}}
	};

\draw[->, bend left = 15, nextedge] (n7) edge (n9);
\draw[->, bend left = 15, nextedge] (n9) edge (n11);
\draw[->, bend left = 15, nextedge] (n1) edge (n3);

\draw[->, bend left = 15, prevedge] (n9) edge (n7);
\draw[->, bend left = 15, prevedge] (n10) edge (n11);
\draw[->, bend left = 15, prevedge] (n3) edge (n1);
\draw[->, bend left = 15, prevedge] (n1) edge (n4);
\draw[->, bend left = 15, prevedge] (n4) edge (n7);
\draw[->, bend left = 15, prevedge] (n11) edge (n9);
\draw[->, bend left = 15, prevedge] (n8) edge (n9);
\draw[->, bend left = 15, prevedge] (n7) edge (root);

\draw[->, bend left = 15, childedge] (n4) edge (n1);
\draw[->, bend left = 15, childedge] (n7) edge (n4);
\draw[->, bend left = 15, childedge] (n11) edge (n10);
\draw[->, bend left = 15, childedge] (n9) edge (n8);
\draw[->, bend left = 15, childedge] (root) edge (n7);
\end{tikzpicture}
\end{center}
\end{minipage}}
\subfigure[splitting 8 from 9\label{fig:bt:e}]{
\begin{minipage}[b]{3.5cm}
\begin{center}
\begin{tikzpicture}[>=latex, thick]
\path[edge from parent/.style={}, level distance = 7mm, sibling distance = 8mm, inner sep = 1pt]
node (root) {r}
	child[missing]
	child {node (n7) {7}
		child {node (n4) {4}
			child[missing]
			child {node (n1) {1}}
			child {node (n3) {3}}
		}
	}
	child {node (n8) {8}}
	child {node (n9) {9}}
	child {node (n11) {11}
		child {node (n10) {10}}
	};

\draw[->, bend left = 15, nextedge] (n7) edge (n8);
\draw[->, bend left = 15, nextedge] (n9) edge (n11);
\draw[->, bend left = 15, nextedge] (n1) edge (n3);
\draw[->, bend left = 15, nextedge] (n9) edge (n8);

\draw[->, bend left = 15, prevedge] (n8) edge (n7);
\draw[->, bend left = 15, prevedge] (n10) edge (n11);
\draw[->, bend left = 15, prevedge] (n3) edge (n1);
\draw[->, bend left = 15, prevedge] (n1) edge (n4);
\draw[->, bend left = 15, prevedge] (n4) edge (n7);
\draw[->, bend left = 15, prevedge] (n11) edge (n9);
\draw[->, bend left = 15, prevedge] (n8) edge (n9);
\draw[->, bend left = 15, prevedge] (n7) edge (root);

\draw[->, bend left = 15, childedge] (n4) edge (n1);
\draw[->, bend left = 15, childedge] (n7) edge (n4);
\draw[->, bend left = 15, childedge] (n11) edge (n10);
\draw[->, bend left = 15, childedge] (root) edge (n7);
\end{tikzpicture}
\end{center}
\end{minipage}}
\subfigure[new border tree\label{fig:bt:f}]{
\begin{minipage}[b]{3.5cm}
\begin{center}
\begin{tikzpicture}[>=latex, thick]
\path[edge from parent/.style={}, level distance = 7mm, sibling distance = 8mm, inner sep = 1pt]
node (root) {r}
	child {node (n7) {7}
		child {node (n4) {4}
			child {node (n1) {1}}
			child {node (n3) {3}}
		}
	}
	child {node (n8) {8}}
	child {node (n11) {11}
		child {node (n10) {10}
	}
	};
\draw[->, bend left = 15, nextedge] (n7) edge (n8);
\draw[->, bend left = 15, nextedge] (n8) edge (n11);
\draw[->, bend left = 15, nextedge] (n1) edge (n3);

\draw[->, bend left = 15, prevedge] (n8) edge (n7);
\draw[->, bend left = 15, prevedge] (n11) edge (n8);
\draw[->, bend left = 15, prevedge] (n3) edge (n1);
\draw[->, bend left = 15, prevedge] (n1) edge (n4);
\draw[->, bend left = 15, prevedge] (n4) edge (n7);
\draw[->, bend left = 15, prevedge] (n7) edge (root);
\draw[->, bend left = 15, prevedge] (n10) edge (n11);

\draw[->, bend left = 15, childedge] (n4) edge (n1);
\draw[->, bend left = 15, childedge] (n7) edge (n4);
\draw[->, bend left = 15, childedge] (root) edge (n7);
\draw[->, bend left = 15, childedge] (n11) edge (n10);
\end{tikzpicture}
\end{center}
\end{minipage}}

\end{center}
\caption{Border trees related to Example~\ref{ex:border:3}, demonstrating how these trees are updated. 
Figure~\ref{fig:bt:a} is given as input to \updatetree. First, \updatetree adds a new node to the tree, shown in Figure~\ref{fig:bt:b}, then proceeds to prune
obsolete borders, resulting in a new border tree, given in Figure~\ref{fig:bt:f}}
\end{figure}
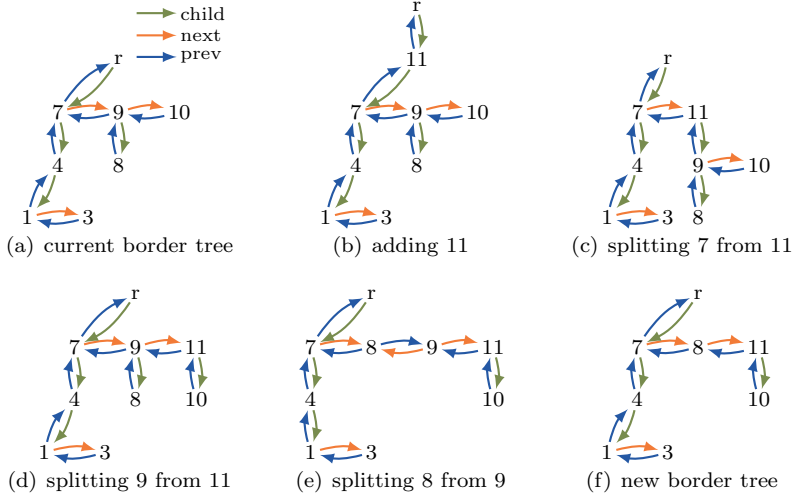

Our next step is to show how to extract the maximal average, and by doing so
compute the right interval. In order to do so we need the following results.

\begin{theorem}
\label{thr:share}
Let $D$ be a sequence  and let $1 \leq i \leq j \leq \abs{D}$ be two indices.
If $a \in \borders{i}$ and $a \geq j$, then $a \in \borders{j}$.
\end{theorem}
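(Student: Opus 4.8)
The plan is to derive Theorem~\ref{thr:share} directly from the averages-based characterization of borders, Theorem~\ref{thr:bordfreq}, applied to the two suffixes $D[i, \abs{D}]$ and $D[j, \abs{D}]$, and to argue by contraposition. First I would restate Theorem~\ref{thr:bordfreq} for a suffix under the shifted convention $\borders{i} = \borders{i, \abs{D}}$: an index $a$ with $i \leq a \leq \abs{D}$ lies in $\borders{i}$ if and only if there is no pair $(c, b)$ with $i \leq c < a \leq b \leq \abs{D}$ such that $\freq{c, a - 1} \geq \freq{a, b}$; the analogous statement holds for $\borders{j}$ with the lower bound $i$ replaced by $j$.

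Then I would suppose, for contradiction, that $a \in \borders{i}$ and $a \geq j$ but $a \notin \borders{j}$. Since $a \geq j \geq i$ and $a \leq \abs{D}$ (because $a \in \borders{i}$), $a$ is a legitimate index of the suffix $D[j, \abs{D}]$, so the suffix form of Theorem~\ref{thr:bordfreq} supplies a witness pair $(c, b)$ with $j \leq c < a \leq b \leq \abs{D}$ and $\freq{c, a - 1} \geq \freq{a, b}$. But $j \geq i$ forces $i \leq c < a \leq b \leq \abs{D}$, so the very same pair $(c, b)$ witnesses, again via Theorem~\ref{thr:bordfreq}, that $a \notin \borders{i}$, contradicting the hypothesis. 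Hence $a \in \borders{j}$. The boundary case $a = j$ needs a separate line but is immediate: the first index of any sequence is always a border, since no admissible $c$ exists.

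The only point needing care, and it is bookkeeping rather than a genuine obstacle, is keeping the index ranges straight when translating Theorem~\ref{thr:bordfreq} from its raw form ``$a$ a border of $D$'' into a statement about ``$a$ a border of the suffix $D[i, \abs{D}]$''. Once that translation is fixed, the whole content of the theorem is the one-line observation that enlarging the starting index from $i$ to $j$ only shrinks the set of candidate bad witnesses $(c, b)$; consequently, for indices $a \geq j$, being a border of the longer suffix $D[i, \abs{D}]$ is a strictly stronger requirement than being a border of $D[j, \abs{D}]$, which is exactly the implication to be proved.
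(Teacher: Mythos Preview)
Your proposal is correct and is essentially identical to the paper's proof: both argue by contraposition, invoke Theorem~\ref{thr:bordfreq} on the suffix $D[j,\abs{D}]$ to obtain a witness pair $(x,y)$ (your $(c,b)$) with $j \leq x < a \leq y$ and $\freq{x,a-1} \geq \freq{a,y}$, and then observe that $i \leq j \leq x$ makes the same pair a witness against $a \in \borders{i}$. Your added remarks on the index-shift bookkeeping and the $a=j$ boundary case are sound but not needed for the argument to go through.
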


\begin{proof}
Assume that $a \notin \borders{j}$. Then Theorem~\ref{thr:bordfreq}
implies that there are $j \leq x < a \leq y \leq \abs{D}$ such that $\freq{x, a - 1}
\geq \freq{a, y}$.  Since $i \leq x$, Theorem~\ref{thr:bordfreq} immediately
implies $a \notin \borders{i}$.
\end{proof}

\begin{corollary}
\label{corr:sharemax}
Let $D$ be a sequence  and let $1 \leq i \leq j \leq \abs{D}$ be two indices.
If $a = \max \borders{i}$ and $a \geq j$, then $a = \max \borders{j}$.
\end{corollary}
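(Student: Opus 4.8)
The plan is to reduce the statement to Theorem~\ref{thr:share} and Theorem~\ref{thr:tree}, both already at our disposal. First I would use Theorem~\ref{thr:share}: since $a \in \borders{i}$ and $a \geq j$ (and $i \leq j$), it yields $a \in \borders{j}$. Combined with the hypothesis $a \in \borders{i}$, this exhibits $a$ as a common border, $a \in \borders{i} \cap \borders{j}$, which is exactly the hypothesis needed to invoke Theorem~\ref{thr:tree}.

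Next I would apply Theorem~\ref{thr:tree} with this shared border $a$: for every index $b \geq a$ we have $b \in \borders{i}$ if and only if $b \in \borders{j}$. Since $a = \max \borders{i}$, there is no $b > a$ with $b \in \borders{i}$, hence no $b > a$ with $b \in \borders{j}$ either. As $a$ itself lies in $\borders{j}$, this forces $a = \max \borders{j}$, which is the claim. (One could alternatively argue directly via Theorem~\ref{thr:bordmax}, comparing the maximal averages $\max_{i \leq k}\freq{k,\abs{D}}$ and $\max_{j \leq k}\freq{k,\abs{D}}$, but that route still needs Theorem~\ref{thr:share} to know $a$ is actually a border of $D[j,\abs{D}]$, so the two-line argument above is cleaner.)

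I do not expect a genuine obstacle here; the only point requiring a little care is bookkeeping of the index-shift convention — that $\borders{i}$ abbreviates $\borders{i,\abs{D}}$ and that its elements are reported as absolute indices into $D$ — so that the hypotheses of Theorems~\ref{thr:share} and~\ref{thr:tree}, both stated for absolute indices into the same sequence $D$, line up directly with the quantities $a$, $i$, $j$ in the corollary. Once the conventions are matched, each cited result contributes one line.
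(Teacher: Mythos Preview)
Your proposal is correct and follows essentially the same approach as the paper: apply Theorem~\ref{thr:share} to get $a \in \borders{j}$, then use Theorem~\ref{thr:tree} (with $a$ as the shared border) to conclude no larger border can appear in $\borders{j}$. The paper phrases the second step as a short contradiction rather than a direct equivalence, but the substance is identical.
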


\begin{proof}
Theorem~\ref{thr:share} implies $a \in \borders{j}$. Since both border
lists share $a$ they also share any border larger than $a$. If both $b  \in \borders{j}$ and $b > a$,
then Theorem~\ref{thr:tree} implies $b \in \borders{i}$, which is a contradiction.
Consequently, $a = \max \borders{j}$.
\end{proof}

\begin{corollary}
\label{corr:treemax}
Let $D$ be a sequence and let $C$ be a set of indices. Let $\btree{D, C}$ be a
border tree and let $r$ be its root. Select $c \in C$ and let $a \in \children{r}$ be the smallest
index such that $c \leq a$. Then $\freq{a, \abs{D}} \geq \freq{b, \abs{D}}$ for any $b \geq c$.
\end{corollary}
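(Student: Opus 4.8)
The plan is to identify the node $a$ explicitly: I claim $a = \max\borders{c}$, after which the corollary is immediate from Theorem~\ref{thr:bordmax}. Indeed, applying Theorem~\ref{thr:bordmax} to the sequence $D[c, \abs{D}]$ and translating back to absolute indices via the convention $\borders{c} = \borders{c, \abs{D}}$ gives $\freq{\max\borders{c}, \abs{D}} = \max_{c \leq b \leq \abs{D}} \freq{b, \abs{D}}$, which is exactly the inequality we want once $a = \max\borders{c}$ is established.

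First I would check existence of $a$ and the easy inequality $a \leq \max\borders{c}$. Since the smallest border of $D[c, \abs{D}]$ is its first index, $c \in \borders{c}$, so $\max\borders{c} \geq c$; and by the construction of $\btree{D, C}$ the last border of $\borders{c}$ is a child of the root $r$. Hence $\max\borders{c}$ is itself a child of $r$ with $\max\borders{c} \geq c$, so $a$ is well defined and $a \leq \max\borders{c}$.

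For the reverse inequality I would argue by contradiction, assuming $a < \max\borders{c}$. As a child of $r$, the node $a$ is the last border of $\borders{c'}$ for some $c' \in C$, that is, $a = \max\borders{c'}$, and so $c' \leq a$. Now I split on the position of $c'$: if $c' \leq c$, then $c' \leq c \leq a = \max\borders{c'}$ and Corollary~\ref{corr:sharemax} forces $a = \max\borders{c}$, a contradiction; if $c < c'$, set $m = \max\borders{c}$, so that $m > a \geq c'$, and then Theorem~\ref{thr:share} applied with the pair $(c, c')$ yields $m \in \borders{c'}$, contradicting $m > \max\borders{c'} = a$. Either branch is impossible, so $a = \max\borders{c}$ and we are done. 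The only thing to be careful about is keeping the index-shift convention for $\borders{\cdot}$ consistent so that Theorems~\ref{thr:bordmax} and~\ref{thr:share} and Corollary~\ref{corr:sharemax} are invoked with the correct arguments; the case split $c' \leq c$ versus $c' > c$ is really the heart of the argument, and I do not anticipate any deeper obstacle.
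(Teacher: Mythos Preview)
Your proposal is correct and follows essentially the same approach as the paper: both reduce to proving $a = \max\borders{c}$, establish $a \leq \max\borders{c}$ from the tree construction, pick $c' \in C$ with $a = \max\borders{c'}$, and then case-split on $c' \leq c$ versus $c' > c$ using Theorem~\ref{thr:share}/Corollary~\ref{corr:sharemax}. The only cosmetic differences are that the paper argues directly rather than by contradiction and invokes Corollary~\ref{corr:sharemax} in both branches, whereas you use Theorem~\ref{thr:share} in the second branch; these are equivalent.
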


\begin{proof}
Let $b = \max \borders{c}$ be the maximal border.
Theorem~\ref{thr:bordmax} states that we need to prove $a = b$.
We see immediately
that $a \leq b$.  Let $d$ be such that $a = \max \borders{d}$. If $d \leq c$, then, since $c \leq a$,
Corollary~\ref{corr:sharemax} implies $a = \max \borders{c} = b$. On the other hand, if $d > c$, then since $d \leq a \leq b$, Corollary~\ref{corr:sharemax} implies
$b = \max \borders{d} = a$.
\end{proof}

Corollary~\ref{corr:treemax} gives a way to find the maximal average.  Given
$\btree{D, C}$ and $c \in C$, we look for the smallest
child of root, say $a$, such that $a \geq c$.

Our next step is to update a border tree from $T = \btree{D[1, i -
1], C}$ to $\btree{D[1, i], C}$, an update step similar to
Algorithm~\ref{alg:border}.  We start by first adding a node $i$
between a root and its children. This 
corresponds to the first two lines in Algorithm~\ref{alg:border}. After this
we modify the tree such that Theorem~\ref{thr:update} holds for every
path from $c \in C$ to the root. In Algorithm~\ref{alg:border} we simply
deleted indices that were no longer borders.  However, since a single node $n$ can be
shared by several border lists we cannot just delete it, since it might be the
case that it is still used by another border list.  Instead, we reattach
children of $n$ violating Theorem~\ref{thr:update} to the root; 
effectively removing $n$ from the border lists in which $n$ is no longer a
border. We give the pseudo-code in Algorithm~\ref{alg:tree}.

\begin{algorithm}[htb!]
\caption{$\textsc{UpdateTree}(T, C, D, i)$}
\label{alg:tree}
\Input{A tree $T = \btree{D[1, i - 1], C}$, a set of candidates $C$, a sequence $D$, an index $i$}
\Output{border tree $\btree{D[1, i], C}$}
	add node $i$ between the root and its children\;
	$a \define i$\;
	\While {$a$ exists} {
		$n \define $ next sibling of $a$\;

		\If {$a$ is a leaf} {
		 	\lIf {$a \notin C$} {
				delete $a$ from $T$
			}
		}
		\Else {
			$b \define $ first child of $a$\;
			\If {$\freq{b, i} \geq \freq{a, i}$} {
			\nllabel{alg:tree:test}
				detach $b$ from $a$\;
				attach $b$ to the root left to $a$\;
				$n \define b$\;
			}
		}
		$a \define n$\;
	}
	\Return $T$\;
\end{algorithm}

\begin{example}
\label{ex:border:3}
Let us continue Examples~\ref{ex:border:1}--\ref{ex:border:2}.  Assume that we
have a sequence given in Example~\ref{ex:border:1} and that we have $C =
\set{1, 3, 8, 10}$. Based on borders given in Example~\ref{ex:border:2}, the
border tree is given in Figure~\ref{fig:bt:a}.  Assume that we see a new data
point, $D_{11} = 1$.  We have $\borders{1, 11} = (1, 4, 7)$, $\borders{3, 11} =
(3, 4, 7)$, $\borders{8, 11} = (8)$, and $\borders{10, 11} = (10, 11)$.

We begin updating the tree by first adding node $11$ between the root and its
children, see Figure~\ref{fig:bt:b}. We continue by checking the first child of
$11$: node $7$, and reattach it to $r$, see Figure~\ref{fig:bt:c}. After this,
we check the first child of $7$, node $4$ and leave it unmodified. We continue
by reattaching $9$ to the root, see Figure~\ref{fig:bt:d}, and similarly node
$8$, see Figure~\ref{fig:bt:e}. Since node $9$ is now a leaf and $9 \notin C$,
we can delete it. Finally, we leave $10$ attached to $11$. The final tree,
which corresponds to the correct border tree, is given in Figure~\ref{fig:bt:f}.
\end{example}

\begin{theorem}
\label{thr:btreecorrect}
Let $T = \btree{D[1, i - 1], C}$.
Algorithm $\textsc{UpdateTree}(T, C, D, i)$ outputs $\btree{D[1, i], C}$.
\end{theorem}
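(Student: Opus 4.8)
The plan is to verify the per-list characterization of the border tree: after \updatetree terminates, for every $c\in C$ the path from $c$ up to the root should spell out $\borders{c,i}$ followed by the root. The input $T$ already has this property for $\borders{c,i-1}$, and Theorem~\ref{thr:update} tells us that $\borders{c,i}$ is obtained from $\borders{c,i-1}$ by appending $i$ and then deleting a terminal run of borders, so two things must be checked. First, the opening line --- inserting $i$ between the root and all of its children --- realizes ``append $i$'' on every list at once, turning each path $c=\beta_1\to\cdots\to\beta_M\to r$ into $c=\beta_1\to\cdots\to\beta_M\to i\to r$. Second, the while loop must perform exactly the terminal deletions prescribed by Theorem~\ref{thr:update} for every list simultaneously, while also discarding any node that stops being a border of all of the $\borders{c,i}$ and does not itself belong to $C$.

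For the second part I would recast Theorem~\ref{thr:update} locally: a node $\beta_j$, $j\geq 2$, survives on the path of $c$ iff there is some $j'\geq j$ with $\freq{\beta_{j'-1},i}<\freq{\beta_{j'},i}$. The terms with $j'>j$ involve only borders $\geq\beta_j$, which by Theorem~\ref{thr:tree} are shared by every list through $\beta_j$; hence whether the ``upward'' part collapses --- i.e.\ whether $\beta_j$ ends up as a child of the root --- is the same for all those lists, and once it has collapsed, $\beta_j$ must be detached from a child $\gamma$ exactly when $\freq{\gamma,i}\geq\freq{\beta_j,i}$ and kept as the parent of the remaining children; each detached $\gamma$ then becomes a child of the root and the argument recurses. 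This is precisely the behaviour of the loop when the running node $a$ sits at the root: it peels off, in increasing order, the smallest child $b$ with $\freq{b,i}\geq\freq{a,i}$, reattaches it at the root just left of $a$, and descends into $b$; moreover a node is visited by the loop only after it has been peeled to the root, i.e.\ only after its upward context has been settled.

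The crux is that the loop inspects only the \emph{first} child of $a$ and stops at the first failure of the test $\freq{b,i}\geq\freq{a,i}$, so I must show that the children $\gamma$ of $a$ with $\freq{\gamma,i}\geq\freq{a,i}$ form an initial segment of the children in increasing order. I would isolate the lemma: for two children $b<b'$ of a border $a$ in $\btree{D[1,i-1],C}$ one has $\freq{b,a-1}\geq\freq{b',a-1}$. Granting it, if $\freq{b',i}\geq\freq{a,i}$ then, writing $\freq{b',i}$ as the weighted average of $\freq{b',a-1}$ and $\freq{a,i}$, necessarily $\freq{b',a-1}\geq\freq{a,i}$, hence $\freq{b,a-1}\geq\freq{a,i}$, hence $\freq{b,i}\geq\freq{a,i}$ as well --- exactly what is needed. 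To prove the lemma, observe that $b$ being a child of $a$ forces $b$ and $a$ to be consecutive borders of $D[b,i-1]$ (Theorem~\ref{thr:tree} transfers the successor relation from the list of some $c$ to the list of $b$ itself); a short argument from Theorem~\ref{thr:bordfreq} --- for each index $j$ with $b<j<a$ take a witness $(x',y')$ of $j\notin\borders{b,i-1}$ and, when $y'\geq a$, pull the right endpoint down below $a$ using that $a$ is a border of $D[b,i-1]$ --- then shows that $b$ is the only border of $D[b,a-1]$, after which Theorem~\ref{thr:bordmax} gives $\freq{b,a-1}=\max_{b\leq k\leq a-1}\freq{k,a-1}\geq\freq{b',a-1}$.

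With the lemma in hand the remaining work is the loop bookkeeping, which I expect to be the main obstacle. I would proceed by induction over the iterations, maintaining the invariant that the children of the root, read left to right, are strictly increasing and split into a \emph{finished} prefix whose subtrees already agree with $\btree{D[1,i],C}$, then the node currently being processed, then an untouched suffix still in the shape it had right after the insertion of $i$; each step either peels off a child (necessarily smaller than the node it came from, and inserted immediately to its left, so both sortedness and the finished portion are preserved), deletes a leaf not in $C$, or advances along the next-sibling pointer, a node becoming finished once all of its ``$\geq$''-children have been peeled away. A leaf reached by the loop belongs to no $\borders{c,i}$ except the one of its own start index, so it is needed precisely when it lies in $C$, which justifies the deletion rule; carrying the invariant to the end of the loop yields $\btree{D[1,i],C}$. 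The subtle part throughout is confirming that the pointer traversal visits each node that must be repositioned exactly when its upward context has been fixed and never re-enters a subtree it has already finished --- the combinatorial heart, by contrast, is the initial-segment lemma of the previous paragraph.
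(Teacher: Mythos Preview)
Your proposal is correct in its combinatorial heart and reaches the same conclusion as the paper, but the decomposition is genuinely different, and your stated loop invariant is not quite right.

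The paper proves five separate lemmata: first a ``pruning'' lemma (by induction on a pre-order of $T'$) saying that whenever a parent $m$ drops out of $\borders{c,i}$ its child $n$ will eventually be detached; then that paths from each $c\in C$ equal $\borders{c,i}$; then that no superfluous nodes survive; then a post-order monotonicity lemma on the input tree; and finally that child ordering is preserved. The average inequality you isolate --- $\freq{b,a-1}\ge\freq{b',a-1}$ for siblings $b<b'$ under $a$ --- is exactly the fact the paper uses inside its pruning lemma (it just cites Theorem~\ref{thr:bordmax} without further justification), so your explicit derivation of it via ``$b$ is the sole border of $D[b,a-1]$'' is a strict improvement in rigor, and the pull-down argument with Theorem~\ref{thr:bordfreq} does go through. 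Packaging this as an initial-segment property is arguably cleaner: it explains directly why the algorithm may stop peeling children of $a$ at the first failure, whereas the paper's pre-order induction reaches the same effect more obliquely.

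Where you should be careful is the invariant. After you peel $b$ from $a$ and descend into $b$, the node $a$ sits to the right of the current node but is \emph{not} untouched: it has already lost $b$. More generally, to the right of the current node sits a stack of ancestors-by-peeling, each missing an initial run of children, not an ``untouched suffix in the shape it had right after the insertion of $i$''. The correct invariant is: root's children split into a finished prefix, then the current node, then a stack of nodes from which we descended (each having shed a prefix of children), and the traversal pops this stack via next-sibling pointers. With this amendment your induction goes through; without it the invariant fails already in Example~\ref{ex:border:3} at the step where $a=8$ and both $9$ and $11$ sit to the right having each lost a child.
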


See Appendix for the proof.

In addition to \updatetree, we need a routine for updating the 
tree when an index $c$ is deleted from $C$. This is needed when \segment deletes
a candidate for the optimal segmentation. In order to update we simply
check whether $c$ is a leaf, if it is, then we delete it, and recursively test
the parent of $c$.

Finally, let us address memory and time complexity of a border tree.  First of
all, we have $\abs{D}$ nodes at maximum, hence we need $O(\abs{D})$ memory.
Let $L$ be the maximum number of $\abs{C}$. Let $K_i$ be the number of nodes
removed during $\updatetree(T, D, C, i)$.  If we do not modify the tree during
the while-loop, then we execute the while-loop only once, since there is
only child of $r$, namely $i$.  Note that by the end of each $\updatetree(T, D,
C, i)$, root $r$ can have at most $L$ children. This means that at maximum we
have done $L + K_i$ reattachments.  Each reattachment increases the while-loop
executions by 2: we need to check the child attached to the root and we need to
check whether the parent has more children that need to be reattached. Hence,
the while-loop is executed at most $2(L + K_i) + 1$ times during
$\updatetree(T, D, C, i)$. Thus total time complexity is $O(\abs{D}L + \sum_{i
= 1}^{\abs{D}}K_i)$.  Note that once a node is deleted it will not be introduced again.
Hence, $\sum_{i = 1}^{\abs{D}}K_i \leq \abs{D}$.  This gives us a total execution
time of $O(\abs{D}L)$.

\section{Experiments}
\label{sec:experiments}

In this section we empirically evaluate our approach on synthetic and real-world
datasets.\!\footnote{The implementation of the algorithm is given at \url{http://adrem.ua.ac.be/segmentation}}

\paragraph{Synthetic data}
Our main contribution to the paper is the speedup of the dynamic program for
finding the optimal segmentation when using one-dimensional log-linear models.  We
measure the efficiency by the total number of comparisons needed in
Line~\ref{alg:fast:inner} of Algorithm~\ref{alg:fast}.  We define a
\emph{performance ratio} by normalizing this number by the number of
comparisons that we would have made if we would not use any pruning. This ensures
that the ratio is between $0$ and $1$, smaller values indicating faster
performance.  Note that if we do not use any pruning, the total
number of comparisons is $O(K\abs{D}^2)$.

We begin by generating sequences of random samples drawn from the Gaussian
distribution with $0$ mean and $1$ variance. We generated 11 sequences of
lengths $2^k$ for $k = 10, \ldots, 20$ and computed the performance ratio of
our segmentation using $4$ segments of Gaussian distributions (as given in
Example~\ref{ex:gaussian}). From results given in Figure~\ref{fig:ind:size} we
see that we obtain speedups of 1 order of  magnitude for the smallest data, up to 3 orders of magnitude
for longer data: the ratio for the largest sequence is $0.0007$.  Note that the ratios
become smaller as the sequence becomes larger. The reason is that when considering
longer segments, it becomes more likely that we can delete 
candidates, making the algorithm relatively faster.  The absolute
computation time grows with the length of a sequence, $11ms$, $1.3s$, and $20$
minutes for sequences of length $2^{10}$, $2^{15}$, and $2^{20}$, respectively.

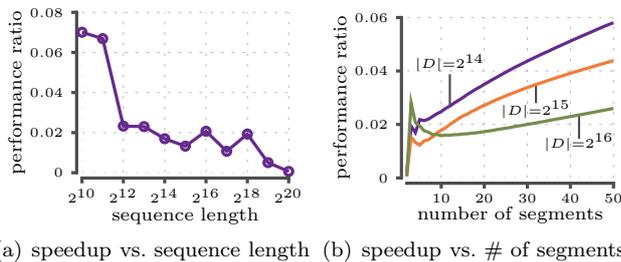
\begin{figure}[htb!]
\begin{center}
\subfigure[speedup vs. sequence length\label{fig:ind:size}]{
\begin{tikzpicture}
\begin{axis}[xlabel=sequence length, ylabel= {performance ratio},
	width = 4.3cm,
	xmax = 20,
	xmin = 10,
	ymin = 0,
	ymax = 0.08,
	scaled y ticks = false,
	cycle list name=yaf,
	yticklabel style={/pgf/number format/fixed},
	xticklabel = {$\scriptstyle 2^{\pgfmathprintnumber{\tick}}$},
	yticklabel = {$\scriptstyle\pgfmathprintnumber{\tick}$}
	]
\addplot table[x expr = {\lineno + 10}, y index = 2, header = false] {indratiosize.dat};
\pgfplotsextra{\yafdrawaxis{10}{20}{0}{0.08}}
\end{axis}
\end{tikzpicture}}
\subfigure[speedup vs. \# of segments\label{fig:ind:seg}]{
\begin{tikzpicture}
\begin{axis}[xlabel=number of segments, ylabel= {performance ratio},
	width = 4.3cm,
	xmax = 50,
	xmin = 2,
	ymin = 0,
	ymax = 0.06,
	no markers,
	scaled y ticks = false,
	xtick = {10, 20, 30, 40, 50},
	cycle list name=yaf,
	yticklabel style={/pgf/number format/fixed},
	xticklabel = {$\scriptstyle\pgfmathprintnumber{\tick}$},
	yticklabel = {$\scriptstyle\pgfmathprintnumber{\tick}$}
	]
\node[coordinate, pin={[pin edge = {yafaxiscolor, thick, shorten > = -2pt}, pin distance = 10pt]above:$\scriptscriptstyle \abs{D} = 2^{14}$}] at (axis cs:12, 0.026882) {}; 
\node[coordinate, pin={[pin edge = {yafaxiscolor, thick, shorten > = -4pt}, pin distance = 3pt]below:$\scriptscriptstyle \abs{D} = 2^{15}$}] at (axis cs:32, 0.034937) {}; 
\node[coordinate, pin={[pin edge = {yafaxiscolor, thick, shorten > = -5pt}, pin distance = 4pt]below:$\scriptscriptstyle \abs{D} = 2^{16}$}] at (axis cs:42, 0.023550) {}; 

\addplot table[x expr = {\lineno + 2}, y index = 0, header = false] {indratioseg.dat};
\addplot table[x expr = {\lineno + 2}, y index = 1, header = false] {indratioseg.dat};
\addplot table[x expr = {\lineno + 2}, y index = 2, header = false] {indratioseg.dat};

\pgfplotsextra{\yafdrawaxis{2}{50}{0}{0.06}}
\end{axis}
\end{tikzpicture}}
\end{center}
\caption{Performance ratio, total number of score comparisons 
(see Algorithm~\ref{alg:fast}, Line~\ref{alg:fast:inner}), normalized between $0$ and $1$,
as a function of sequence length~\ref{fig:ind:size}, using $4$ segments, and as a function of number of segments~\ref{fig:ind:seg}. Smaller values are better}
\end{figure}

Our second experiment is to study the performance ratio as a function of
segments. We sampled 3 sequences from a Gaussian distribution, with $0$ mean and
$1$ variance, of sizes $2^{14}$, $2^{15}$, $2^{16}$. For each sequence we
computed segmentations up to $50$ segments. From the results given in
Figure~\ref{fig:ind:seg} we see that the performance ratio becomes worse as we
increase the number of segments. The reason is that when segments 
become shorter, consequently, the right intervals are more compact and have
less chance of being intersected with the left interval. Nevertheless, we get
$0.06$, $0.04$, and $0.02$ for performance ratios for our sequences when using
$50$ segments. The peak at $3$ segments suggest that discovering segmentation
with $3$ segments is particularly expensive. To see why this is happening, first
note that the first segment always starts from the beginning. This implies that
when looking for a segmentation with $2$ segments for a sequence $D[1, i]$, the
second segment will be typically either really short or really long as its mean
needs to differ from the mean of the first segment. If the second segment is
short, it will have an abnormal right interval, consequently, the interval has
a smaller chance of overlapping with the left interval of the next segment.

\begin{figure}
\begin{center}
\subfigure[sequence\label{fig:step:a}]{
\begin{tikzpicture}
\begin{axis}[xlabel=index, 
	width = 4cm,
	xmax = 4000,
	ymin = -7,
	ymax = 7,
	xtick = {0, 1000, 2000, 3000, 4000},
	x tick label style = {/pgf/number format/set thousands separator = {\,}},
	cycle list name=yaf,
	no markers,
	xticklabel = {$\scriptstyle\pgfmathprintnumber{\tick}$},
	yticklabel = {$\scriptstyle\pgfmathprintnumber{\tick}$}
	]
\addplot[yafcolor2] table[x expr = {\lineno * 40}, y index = 0, header = false] {step.life};
\pgfplotsextra{\yafdrawaxis{0}{4000}{-7}{7}}
\end{axis}
\end{tikzpicture}}
\subfigure[candidate lifetime, $K = 2$]{
\begin{tikzpicture}
\begin{axis}[xlabel=index, ylabel= {lifetime},
	width = 4cm,
	xmax = 4000,
	ymin = 0,
	ymax = 200,
	xmin = 0,
	ytick = {0, 50, 100, 150, 200},
	xtick = {0, 1000, 2000, 3000, 4000},
	x tick label style = {/pgf/number format/set thousands separator = {\,}},
	cycle list name=yaf,
	no markers,
	xticklabel = {$\scriptstyle\pgfmathprintnumber{\tick}$},
	yticklabel = {$\scriptstyle\pgfmathprintnumber{\tick}$}
	]
\begin{scope}
\path[clip] (axis cs: -40, -2.5) rectangle (axis cs: 4000, 200); 
\addplot+[fill, yafcolor1!30] table[x expr = {\lineno * 40}, y index = 1, header = false] {step.life} \closedcycle;
\addplot+[yafcolor1] table[x expr = {\lineno * 40}, y index = 1, header = false] {step.life};
\end{scope}
\pgfplotsextra{\yafdrawaxis{0}{4000}{0}{200}}
\end{axis}
\end{tikzpicture}}

\subfigure[candidate lifetime, $K = 3$]{
\begin{tikzpicture}
\begin{axis}[xlabel=index, ylabel= {lifetime},
	width = 4cm,
	xmax = 4000,
	ymin = 0,
	ymax = 500,
	ytick = {0, 100, 200, 300, 400, 500},
	xtick = {0, 1000, 2000, 3000, 4000},
	x tick label style = {/pgf/number format/set thousands separator = {\,}},
	cycle list name=yaf,
	no markers,
	xticklabel = {$\scriptstyle\pgfmathprintnumber{\tick}$},
	yticklabel = {$\scriptstyle\pgfmathprintnumber{\tick}$}
	]
\begin{scope}
\path[clip] (axis cs: -40, -5) rectangle (axis cs: 4000, 500); 
\addplot+[fill, yafcolor1!30] table[x expr = {\lineno * 40}, y index = 2, header = false] {step.life} \closedcycle;
\addplot+[yafcolor1] table[x expr = {\lineno * 40}, y index = 2, header = false] {step.life};
\end{scope}
\clip; 
\pgfplotsextra{\yafdrawaxis{0}{4000}{0}{500}}
\end{axis}
\end{tikzpicture}}
\subfigure[candidate lifetime, $K = 4$]{
\begin{tikzpicture}
\begin{axis}[xlabel=index, ylabel= {lifetime},
	width = 4cm,
	xmax = 4000,
	ymin = 0,
	ymax = 900,
	ytick = {0, 300, 600, 900},
	xtick = {0, 1000, 2000, 3000, 4000},
	x tick label style = {/pgf/number format/set thousands separator = {\,}},
	y tick label style = {/pgf/number format/set thousands separator = {\,}},
	cycle list name=yaf,
	no markers,
	xticklabel = {$\scriptstyle\pgfmathprintnumber{\tick}$},
	yticklabel = {$\scriptstyle\pgfmathprintnumber{\tick}$}
	]
\begin{scope}
\path[clip] (axis cs: -40, -12) rectangle (axis cs: 4000, 900); 
\addplot+[fill, yafcolor1!30] table[x expr = {\lineno * 40}, y index = 3, header = false] {step.life} \closedcycle;
\addplot+[yafcolor1] table[x expr = {\lineno * 40}, y index = 3, header = false] {step.life};
\end{scope}
\pgfplotsextra{\yafdrawaxis{0}{4000}{0}{900}}
\end{axis}
\end{tikzpicture}}
\end{center}
\caption{Sequence of $4$ Gaussian segments and candidate lifetimes, how many
iterations is needed for a candidate to be deleted, when computing a segmentation
with $K$ segments from a segmentations of $K - 1$ segments,
where $K = 2, 3, 4$. Smaller values imply lower computational burden}
\label{fig:step}
\end{figure}
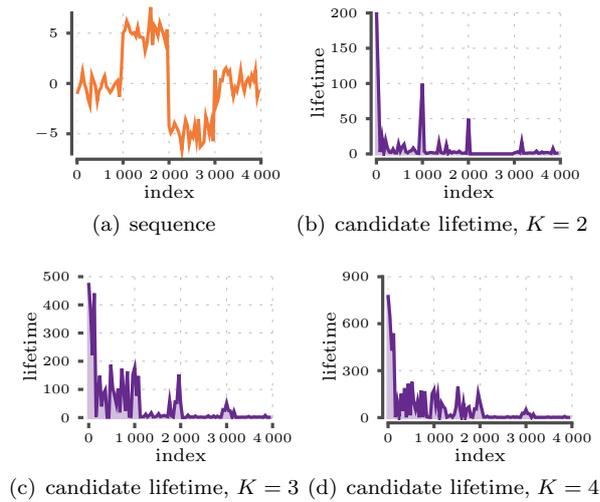

Our next step is to study how candidates for segments are distributed.  A
candidate $c$ is added to $C$ on Line~\ref{alg:fast:add} and deleted from $C$
on Line~\ref{alg:fast:delete} in \segment. The candidate is added when the
counter $i$ is equal to $c$ and let us assume that it is deleted when the
counter is equal to $j$. If $c$ is not deleted, after the for-loop in \segment,
we simply set $j = \abs{D} + 1$. We define a \emph{lifetime} of a candidate $c$
to be $j - c$, that is, a candidate lifetime is how often it has been used in
the maximization step on Line~\ref{alg:fast:inner}. The smaller the value, the
less computational burden a candidate is producing. In the worst case, that is,
without any pruning, the lifetime for a candidate $c$ is equal to
$\abs{D} + 1 - c$.

To study candidate lifetimes we generate a sequence of $4\,000$ samples,
consisting of $4$ segments of Gaussian distribution with $0$, $5$, $-5$, and
$0$ means, respectively, and variance of $1$ (see Figure~\ref{fig:step:a}).  We
computed segmentations up to $4$ segments and present the lifetimes in
Figure~\ref{fig:step}.\!\footnote{For clarity sake, figures show average
lifetimes of bins containing $40$ points} We see that there are four
major spikes in lifetimes, at the beginning of the sequence and around each
change point.  Let us consider a spike at $2\,000$ for $K = 4$. A candidate on the
left side of the spike has a longer lifetime because the left interval of the
next segment is shifted and it is less likely that it will intersect with the
right interval.  On the other hand, a candidate on the right side of the spike
has a longer lifetime because the segment is short and the right interval has a
higher chance of being abnormal. The same rationale applies to spike at the
beginning of the sequence. The spikes grow with increasing number of segments,
nevertheless they are shallow, implying that we have significant speedup.
In fact, the performance ratios are $0.004$, $0.01$, $0.02$ for segmentations
with $K = 2, 3, 4$ segments, respectively.

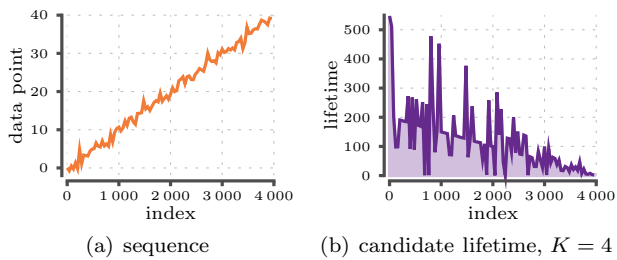
\begin{figure}[htb!]
\begin{center}
\subfigure[sequence\label{fig:slope:a}]{
\begin{tikzpicture}
\begin{axis}[xlabel=index, ylabel= {data point},
	width = 4.3cm,
	xmax = 4000,
	ymin = -2,
	ymax = 40,
	xtick = {0, 1000, 2000, 3000, 4000},
	x tick label style = {/pgf/number format/set thousands separator = {\,}},
	cycle list name=yaf,
	no markers,
	xticklabel = {$\scriptstyle\pgfmathprintnumber{\tick}$},
	yticklabel = {$\scriptstyle\pgfmathprintnumber{\tick}$}
	]
\addplot[yafcolor2] table[x expr = {\lineno * 40}, y index = 0, header = false] {slope.life};
\pgfplotsextra{\yafdrawaxis{0}{4000}{-2}{40}}
\end{axis}
\end{tikzpicture}}
\subfigure[candidate lifetime, $K = 4$\label{fig:slope:b}]{
\begin{tikzpicture}
\begin{axis}[xlabel=index, ylabel= {lifetime},
	width = 4.3cm,
	xmax = 4000,
	ymin = 0,
	ymax = 550,
	ytick = {0, 100, 200, 300, 400, 500},
	xtick = {0, 1000, 2000, 3000, 4000},
	x tick label style = {/pgf/number format/set thousands separator = {\,}},
	cycle list name=yaf,
	no markers,
	xticklabel = {$\scriptstyle\pgfmathprintnumber{\tick}$},
	yticklabel = {$\scriptstyle\pgfmathprintnumber{\tick}$}
	]
\addplot+[fill, yafcolor1!30] table[x expr = {\lineno * 40}, y index = 2, header = false] {slope.life} \closedcycle;
\addplot+[yafcolor1] table[x expr = {\lineno * 40}, y index = 2, header = false] {slope.life};
\pgfplotsextra{\yafdrawaxis{0}{4000}{0}{550}}
\end{axis}
\end{tikzpicture}}
\end{center}

\caption{Sequence sampled from a Gaussian distribution with a slowly increasing mean and candidate lifetimes, how many
iterations is needed for a candidate to be deleted, when computing a segmentation
with $K$ segments from a segmentations of $K - 1$ segments, for segmentation with $K$ segments}
\end{figure}

Finally, let us demonstrate the limitations of our approach.  We generate a
sequence of $4\,000$ samples, where a sample $i$ is generated from a Gaussian
distribution with a mean of $i / 100$ and variance of $1$, see
Figure~\ref{fig:slope:a}. The performance ratio of segmentation with $4$
segments is $0.06$, the lifetimes are given in Figure~\ref{fig:slope:b}. While
we see a good performance for this data, when we increase the slope (or
equivalently, lower variance) the performance ratio becomes worse.  The
worst case scenario is a genuinely monotonically increasing (or
decreasing) sequence, that is, $D_{i + 1} > D_i$.  In such case, the left
intervals and the right intervals will never overlap and no candidate will be
pruned. We should point out that applying segmentation for a monotonic sequence
in the first place is questionable as such sequence does not fit well
the segmentation probabilistic model, and it might be beneficial to detrend the
data to obtain a better segmentation.

\paragraph{Real-world data}
We continue our experiments using real-world data sets.  We considered $3$
different datasets.\!\footnote{The datasets were obtained from
\url{http://www.cs.ucr.edu/~eamonn/discords/}} The first dataset,
\emph{Marotta}, is Space Shuttle Marotta Valve time series, consisting of 5
energize/de-energize cycles (TEK17). The second dataset, \emph{Power}, consists
of a power consumption of a Dutch research facility during the year 1997.  The
third dataset consists of two-dimensional time series extracted from videos
of an actor performing various acts with and without a replica gun. Since
this sequence is two-dimensional, we split the dimensions into 
\emph{Video1} and \emph{Video2}. The sequence lengths are given in Table~\ref{tab:real}.

\begin{table}[htb!]
\caption{Characteristics of real-world datasets and performance of the algorithm with $20$ segments. The last column is the time needed
to compute the optimal segmentation using traditional dynamic program}
\label{tab:real}
\begin{center}
\begin{tabular}{lrrrrr}
\toprule
Data & length & & performance & time (s) & baseline time (s)\\
\midrule
\emph{Marotta} & $5\,000$ &  & $0.04$ & $0.6$ & $13$\\
\emph{Power} & $35\,040$ &  & $0.03$ & $19.5$ & $600$\\
\emph{Video1} & $11\,251$ &  & $0.1$ & $6.7$ & $62$\\
\emph{Video2} & $11\,251$ &  & $0.14$ & $9.7$ & $62$\\
\bottomrule
\end{tabular}
\end{center}
\end{table}

We study the performance by computing segmentations with $20$ segments for each
data and comparing it against the traditional dynamic program, that is, without
deleting any candidates.  From the results, given in Table~\ref{tab:real}, we
see that our approach has a significant advantage over a baseline approach, for
example, with \emph{Power} dataset we find an optimal solution in 20 seconds
while the baseline approach requires 10 minutes.

Finally, let us look at some of the discovered segmentations.  In
Figure~\ref{fig:tek17} we present a segmentation of \emph{Marotta} with $11$
segments.  The segments align with high and low energy states.  Note that the
$3$rd high energy segment is more shallow than the other high energy segments.
This cycle contains an anomaly as pointed out by~\citet{keogh:05:hot-sax}
resulting in a shorter high energy segment. In Figure~\ref{fig:power} we show a
segmentation with $3$ segments of the power consumption. We can see that the
mean of the middle segment is lower than the other means, indicating a summer
season.

\begin{figure}[htb!]
\begin{center}
\begin{tikzpicture}
\begin{axis}[xlabel=index, ylabel= {data point},
	width = 9cm,
	height = 3cm,
	ymin = -2,
	ymax = 4.5,
	xtick = \empty,
	xtick = { 4433, 4160, 3404, 3150, 2330, 2165, 1390, 1151, 372, 161},
	x tick label style = {font=\scriptsize, yshift = 3pt, rotate = 45, anchor = north east},
	xmajorgrids = false,
	x tick label style = {/pgf/number format/set thousands separator = {\,}},
	cycle list name=yaf,
	xticklabel = {$\scriptstyle\pgfmathprintnumber{\tick}$},
	yticklabel = {$\scriptstyle\pgfmathprintnumber{\tick}$}
	]

\draw[yafaxiscolor, line width = 0.5pt] (axis cs: 161, -2.300000) -- (axis cs: 161, 4.500000);
\draw[yafaxiscolor, line width = 0.5pt] (axis cs: 372, -2.300000) -- (axis cs: 372, 4.500000);
\draw[yafaxiscolor, line width = 0.5pt] (axis cs: 1151, -2.300000) -- (axis cs: 1151, 4.500000);
\draw[yafaxiscolor, line width = 0.5pt] (axis cs: 1390, -2.300000) -- (axis cs: 1390, 4.500000);
\draw[yafaxiscolor, line width = 0.5pt] (axis cs: 2165, -2.300000) -- (axis cs: 2165, 4.500000);
\draw[yafaxiscolor, line width = 0.5pt] (axis cs: 2330, -2.300000) -- (axis cs: 2330, 4.500000);
\draw[yafaxiscolor, line width = 0.5pt] (axis cs: 3150, -2.300000) -- (axis cs: 3150, 4.500000);
\draw[yafaxiscolor, line width = 0.5pt] (axis cs: 3404, -2.300000) -- (axis cs: 3404, 4.500000);
\draw[yafaxiscolor, line width = 0.5pt] (axis cs: 4160, -2.300000) -- (axis cs: 4160, 4.500000);
\draw[yafaxiscolor, line width = 0.5pt] (axis cs: 4433, -2.300000) -- (axis cs: 4433, 4.500000);

\addplot[yafcolor2, no markers] table[x expr = {\lineno * 10}, y index = 0, header = false] {tek.life};
\pgfplotsextra{\yafdrawaxis{0}{5000}{-2}{4.5}}
\end{axis}
\end{tikzpicture}
\end{center}
\caption{Segmentation with 11 segments of Space Shuttle Marotta Valve time series}
\label{fig:tek17}
\end{figure}
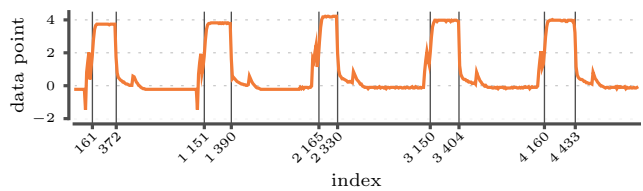

\begin{figure}[htb!]
\begin{center}
\begin{tikzpicture}
\begin{axis}[xlabel=index, ylabel= {data point},
	width = 9cm,
	height = 3cm,
	ymin = 600,
	ymax = 2000,
	xtick = {8232, 22015},
	scaled x ticks = false,
	xmajorgrids = false,
	x tick label style = {/pgf/number format/set thousands separator = {\,}},
	cycle list name=yaf,
	xticklabel = {$\scriptstyle\pgfmathprintnumber{\tick}$},
	yticklabel = {$\scriptstyle\pgfmathprintnumber{\tick}$}
	]

\draw[yafcolor1, line width = \yafaxiswidth] (axis cs: 8232, 540.000000) -- (axis cs: 8232, 2000.000000);
\draw[yafcolor1, line width = \yafaxiswidth] (axis cs: 22015, 540.000000) -- (axis cs: 22015, 2000.000000);
\draw[yafcolor1, line width = \yafaxiswidth] (axis cs: 0, 1212.914845) -- (axis cs: 8231, 1212.914845);
\draw[yafcolor1, line width = \yafaxiswidth] (axis cs: 8232, 1076.447363) -- (axis cs: 22014, 1076.447363);
\draw[yafcolor1, line width = \yafaxiswidth] (axis cs: 22015, 1172.030557) -- (axis cs: 35039, 1172.030557);

\addplot[yafcolor2, no markers] table[x expr = {\lineno * 175.2}, y index = 0, header = false] {power.life};

\pgfplotsextra{\yafdrawaxis{0}{35040}{600}{2000}}
\end{axis}
\end{tikzpicture}
\caption{Segmentation with 3 segments of the \emph{Power} dataset. The horizontal lines represent the means of the individual segments.}
\label{fig:power}
\end{center}
\end{figure}
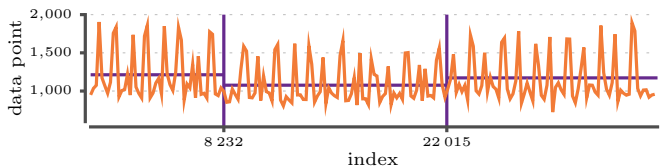

\section{Related Work}
\label{sec:related}

Segmentation is an instance of a larger problem setting, called change point
detection, see~\citep{basseville:93:detection}, for introduction.  We can
divide the problem settings broadly into two categories: offline and online.
Although these settings have conceptually the same goal, the setup details
make it different from an algorithmic point of view. In online change point
detection (see~\citet{kifer:04:detecting}, for example) the data arrives in a
stream fashion, typically there is no budget for how many change points are
allowed, and the decision needs to be made within some time frame, whereas in
segmentation, offline change point detection, new datapoints can change
early segments. A typical goal for online change detection is to alert
the system or a user of a change, whereas in segmentation the only goal is to summarize the
sequence.

Popular approaches for segmentation are top-down approaches where we select
greedily a new change-point
(see~\citet{shatkay:96:approximate,bernaola-galvan:96:compositional,douglas:73:algorithms,lavrenko:00:mining},
for example) and bottom-up approaches where at the beginning each point is a
segment, and points are combined in a greedy fashion
(see~\citet{palpanas:04:online}, for example). A randomized heuristic 
was suggested by~\citet{himberg:01:time}, where we start from a random
segmentation and optimize the segment boundaries. 
These approaches, although fast, are heuristics and have no
theoretical guarantees of the approximation quality.  A divide-and-segment
approach, an approximation algorithm with theoretical guarantees on the
approximation quality was given by~\citet{terzi:06:efficient}.

Modifications of the original segmentation problem have been also studied.
Discovering recurrent sources is a setup where one limits the amount of distinct
means of the segments to be $H$ such that $H < K$, where $K$ is the number of
allowed segments has been suggested~\citep{gionis:03:finding}.
\citet{haiminen:04:unimodal} study unimodal sequences, where
means of the centroids (of one-dimensional sequence) are required to follow a
unimodal curve, that is, the means should only rise to some point and then only
decline afterwards.
For a survey of the segmentation algorithms, see Chapter 8 in~\citep{dzeroski:11:inductive}. 

\section{Discussion and Conclusions}
\label{sec:conclusions}
In this paper we introduced a pruning technique to speedup the dynamic program
used for solving the segmentation problem. We demonstrated on both synthetic
and real-world data that we gain a significant speedup by using our pruning
technique.

We should point out that our pruning is online, that is, the decision to delete
a candidate is based only on current and past data points.  We believe that we
can speedup the algorithm further by applying additional pruning techniques
based on future data points, such as~\citep{gedikli:10:modified}. In addition,
we conjecture that these optimizations may prove to be useful in other setups,
such as, discovering HMMs or CRFs, where dynamic programs are used in order to
optimize the model.  We leave these studies as future work.

Segmentation requires a parameter, namely the number of segments.  One approach
to remove this parameter is by using model selection techniques, such as,
BIC~\citep{schwarz:78:estimating} or MDL~\citep{grunwald:07:minimum}. We conjecture
that using these techniques not only remove the parameter but can be also used
for further speedup. 

Our algorithm is limited only to handle one-dimensional case.  However, the key
result, Theorem~\ref{thr:suff}, actually handles the multi-dimensional case.  The
reason why we limit ourselves to one-dimensional case is that we were able to
verify the sufficient conditions in Theorem~\ref{thr:suff} with relative ease. We leave
studying applying Theorem~\ref{thr:suff} more generally as future work. While we are
skeptical whether it is possible verify the conditions in Theorem~\ref{thr:suff}
exactly, we believe that it is possible to find more conservative conditions
that can be easily checked and that will imply the conditions in
Theorem~\ref{thr:suff}.

\section*{Acknowledgements}
Nikolaj Tatti was partly supported by a Post-Doctoral Fellowship of the Research Foundation -- Flanders (\textsc{fwo}).

\bibliographystyle{spbasic}
\bibliography{bibliography}

\appendix
\section{Proofs}
\subsection{Proof of Theorem~\ref{thr:suff}}
\label{sec:appprune}
Theorem~\ref{thr:suff} will follow from the following theorem.

\begin{theorem}
Let $D = \enpr{D_1}{D_e}$. Let $1 \leq m < e$.
Assume that $\diff{D[1, m], D[m + 1, e]}$ is a cover.
Then there exists $n > m$ such that $\score{[1, n], [n + 1, e]} > \score{[1, m], [m + 1, e]}$
or there exists $l < m$ such that $\score{[1, l], [l + 1, e]} \geq \score{[1, m], [m + 1, e]}$.
\label{thr:ascent}
\end{theorem}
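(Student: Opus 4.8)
The plan is to collapse the optimal score of a segment into a single convex function of its mean, and then read a separating direction for $\diff{D[1, m], D[m + 1, e]}$ off a supporting functional of that function. Since $\score{D \mid r} = \abs{D}\fpr{Z(r) + r^T\freq{D}}$, writing $g(\mu) = \sup_r \fpr{Z(r) + r^T\mu}$ gives $\score{D} = \abs{D}g\fpr{\freq{D}}$, with $g$ convex as a pointwise supremum of affine functions. Let $p$ be an optimal parameter for $D[1, m]$ and $s$ an optimal parameter for $D[m + 1, e]$; then $p$ maximizes $r \mapsto Z(r) + r^T\freq{1, m}$ and $s$ maximizes $r \mapsto Z(r) + r^T\freq{m + 1, e}$. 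Put $\phi(x) = \fpr{Z(s) - Z(p)} + (s - p)^T x$, an affine functional. Optimality of $p$ and $s$ already gives two inequalities for free: $\phi\fpr{\freq{1, m}} = \spr{Z(s) + s^T\freq{1, m}} - g\fpr{\freq{1, m}} \leq 0$, and symmetrically $\phi\fpr{\freq{m + 1, e}} \geq 0$.

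The workhorse is a two-sided estimate for how the optimal score of a segment moves when a block is appended: if $G$ is obtained by concatenating a nonempty block $B$ to a sequence $F$ at one of its ends, then
\[
	\abs{B}\fpr{Z(r_F) + r_F^T\freq{B}} \leq \score{G} - \score{F} \leq \abs{B}\fpr{Z(r_G) + r_G^T\freq{B}}
\]
for any optimal parameters $r_F$ of $F$ and $r_G$ of $G$; this is immediate from plugging $r_F$ into $\score{G \mid r}$ (lower bound) and $r_G$ into $\score{F \mid r}$ (upper bound) together with $\cs{G} = \cs{F} + \cs{B}$. Applying this to the two ways of moving the border and lower-bounding each resulting segment-score change — using the parameter $p$ for the pieces touching $D[1, m]$ and $s$ for those touching $D[m + 1, e]$ — the normalization terms telescope. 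Sliding the border left from $m$ to $k - 1$ (detach $D[k, m]$ from the end of the first segment and prepend it to the second) gives
\[
	\score{[1, k - 1], [k, e]} - \score{[1, m], [m + 1, e]} \geq (m - k + 1)\,\phi\fpr{\freq{k, m}} \quad\text{for } 2 \leq k \leq m,
\]
and sliding it right from $m$ to $j$ (move $D[m + 1, j]$ into the first segment) gives $\score{[1, j], [j + 1, e]} - \score{[1, m], [m + 1, e]} \geq -(j - m)\,\phi\fpr{\freq{m + 1, j}}$ for $m + 1 \leq j \leq e - 1$.

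Now suppose, towards a contradiction, that neither conclusion of the theorem holds, that is, $\score{[1, n], [n + 1, e]} \leq \score{[1, m], [m + 1, e]}$ for every $n > m$ and $\score{[1, l], [l + 1, e]} < \score{[1, m], [m + 1, e]}$ for every $l < m$. Substituting into the two displayed bounds forces $\phi\fpr{\freq{k, m}} < 0$ for $2 \leq k \leq m$ and $\phi\fpr{\freq{m + 1, j}} \geq 0$ for $m + 1 \leq j \leq e - 1$, which together with the two free inequalities yields $\phi \leq 0$ on $\set{\freq{k, m} \mid 1 \leq k \leq m}$, strictly except possibly at $k = 1$, and $\phi \geq 0$ on $\set{\freq{m + 1, j} \mid m + 1 \leq j \leq e}$. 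By definition $\diff{D[1, m], D[m + 1, e]}$ is exactly the set of differences $\freq{k, m} - \freq{m + 1, j}$, and for $y = s - p$ one has $y^T\fpr{\freq{k, m} - \freq{m + 1, j}} = \phi\fpr{\freq{k, m}} - \phi\fpr{\freq{m + 1, j}} \leq 0$; provided $\phi\fpr{\freq{1, m}} < 0$ this is strict for every admissible $k$ and $j$, so $y$ witnesses that $\diff{D[1, m], D[m + 1, e]}$ is not a cover, contradicting the hypothesis.

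The remaining, genuinely delicate, case is $\phi\fpr{\freq{1, m}} = 0$, and handling it is the step I expect to require the most care. Since the maximizer of $r \mapsto Z(r) + r^T\mu$ is unique, this equality forces $s = p$, hence $\freq{1, m} = \freq{m + 1, e}$, which then equals $\freq{1, e}$, so $\score{[1, m], [m + 1, e]} = e\, g\fpr{\freq{1, e}}$. As $\freq{1, e}$ is the $(j, e - j)$-weighted average of $\freq{1, j}$ and $\freq{j + 1, e}$, convexity of $g$ gives $\score{[1, j], [j + 1, e]} \geq e\, g\fpr{\freq{1, e}}$ for every $1 \leq j < e$, so any $l < m$ — which exists once $m > 1$ — furnishes the non-strict left improvement; the leftover sub-case, a constant sequence, is degenerate.
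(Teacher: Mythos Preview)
Your argument is close in spirit to the paper's: both hinge on the direction between the (near-)optimal parameters of the two segments, and your affine functional $\phi$ is exactly the paper's $g(\cdot,\delta\mid s,r)$ up to sign and a length factor. The paper argues forward --- take $y=r-s$, use the cover to extract witnessing indices $(l,n)$, and chain $h(n)-h(m)\ge c\,(h(m)-h(l))$ with an $\epsilon$-slack --- whereas you argue the contrapositive.

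The genuine gap is that you assume optimal parameters $p$ and $s$ exist. The paper defines $\score{D}=\sup_r\score{D\mid r}$, and this supremum need not be attained (think of a Bernoulli segment of all zeros, or any segment whose empirical mean sits on the boundary of the mean-parameter space). This is precisely why the paper runs an $\epsilon$-argument: it picks $s,r$ with $h(m\mid s,r)\ge y-\epsilon$, carries the slack through the chain to get $y-x\le\epsilon(1+e)$, and then lets $\epsilon\to 0$. Without attainment, your ``workhorse'' upper bound $\score{G}-\score{F}\le\abs{B}\bigl(Z(r_G)+r_G^{T}\freq{B}\bigr)$ has no $r_G$ to plug in, the free inequalities $\phi\fpr{\freq{1,m}}\le 0$ and $\phi\fpr{\freq{m+1,e}}\ge 0$ are unavailable, and the strict sign pattern you need to certify non-cover cannot be established. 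Patching this would force you to work with $\epsilon$-optimal parameters and track the slack, at which point you are essentially reproducing the paper's proof.

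Two smaller points on the degenerate branch. First, you do not need uniqueness of the maximizer or the implication $\freq{1,m}=\freq{m+1,e}$: from $\phi\fpr{\freq{1,m}}=0$ alone, $s$ is optimal for \emph{both} segments, hence $\score{[1,m],[m+1,e]}=\score{D[1,e]\mid s}\le e\,g\fpr{\freq{1,e}}$, and your convexity step finishes for any $l<m$. Second, the $m=1$ sub-case is not dismissible: for a constant sequence $\diff{D[1,1],D[2,e]}=\{0\}$ is a cover, yet no $n>1$ gives a strict improvement and no $l\ge 1$ exists. The paper's proof silently allows $l=0$ here (its $x$ then equals $\score{D[1,e]}$), and you should do the same rather than wave it away.
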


In order to prove the theorem we will introduce some helpful notation. First, given
a parameter vector $s$ and $r$, we define
\[
	h(k \mid s, r) = \score{[1, k] \mid s} + \score{[k + 1, e] \mid r}\quad.
\]
Note that $h(k \mid s, r) \leq \score{[1, k], [k + 1, e]}$. We also define
\[
	 g(l, \delta \mid s, r)  = l(Z(s) - Z(r) + (s - r)^T\delta)\quad.
\]
This function is essentially the difference between two scores.
\begin{lemma}
Let $k > l$.  We have $h(k \mid s, r) - h(l \mid s, r) =g(k - l, \freq{l + 1, k} \mid s, r)$.
\end{lemma}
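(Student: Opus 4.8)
The plan is to prove the identity by unfolding the definitions, since both $h$ and $g$ are given explicitly in terms of $Z$ and inner products, so no structural argument is needed. First I would substitute $\score{D[b, e] \mid t} = (e - b + 1)Z(t) + t^T\cs{b, e}$ (for $t \in \set{s, r}$) into $h(k \mid s, r) = \score{[1, k] \mid s} + \score{[k + 1, e] \mid r}$, obtaining
\[
	h(k \mid s, r) = kZ(s) + s^T\cs{1, k} + (e - k)Z(r) + r^T\cs{k + 1, e}
\]
together with the analogous expression for $h(l \mid s, r)$. Subtracting, the normalization terms collapse to $(k - l)\pr{Z(s) - Z(r)}$, because $(e - k) - (e - l) = l - k$.

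Next I would treat the two linear terms by telescoping the cumulative sums. Additivity of $\cs{\cdot}$ over consecutive blocks gives $\cs{1, k} - \cs{1, l} = \cs{l + 1, k}$ and $\cs{k + 1, e} - \cs{l + 1, e} = -\cs{l + 1, k}$, so the linear part of the difference is $s^T\cs{l + 1, k} - r^T\cs{l + 1, k} = (s - r)^T\cs{l + 1, k}$. Combining the two observations,
\[
	h(k \mid s, r) - h(l \mid s, r) = (k - l)\pr{Z(s) - Z(r)} + (s - r)^T\cs{l + 1, k}.
\]
Finally, since $k > l$ the segment $D[l + 1, k]$ is nonempty and $\abs{D[l + 1, k]} = k - l$, hence $\cs{l + 1, k} = (k - l)\freq{l + 1, k}$; substituting this and factoring out $k - l$ yields exactly $g(k - l, \freq{l + 1, k} \mid s, r)$, as claimed.

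There is essentially no obstacle here: the whole argument is a short computation once the definitions are expanded. The only point needing a little care is the index bookkeeping in the telescoping step, and noting that the hypothesis $k > l$ is precisely what guarantees $\freq{l + 1, k}$ is well defined. The lemma is best viewed as a convenient algebraic restatement that will be reused when analysing how the combined score $h(\cdot \mid s, r)$ changes as the split point moves, in the proof of Theorem~\ref{thr:ascent}.
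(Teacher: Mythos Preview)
Your proposal is correct and follows essentially the same approach as the paper: both proofs expand $h(k\mid s,r)$ from the definitions, subtract, and rewrite the difference of cumulative sums as $(k-l)\,\freq{l+1,k}$ to recover $g$. The only cosmetic difference is that the paper first rewrites $h(k\mid s,r)$ so that the $k$-independent terms are isolated before subtracting, whereas you subtract first and then telescope; the computation is the same.
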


\begin{proof}
Note that
\[
\begin{split}
	h(k \mid s, r) & = kZ(s) + s^T\cs{k} + (e - k)Z(r) + r^T(\cs{e} - \cs{k}) \\
	               & = k(Z(s) - Z(r)) + (s - r)^T\cs{k} + eZ(r) + r^T\cs{e}\quad.
\end{split}
\]
The last two terms do not depend on $k$. This allows us to write
\[
\begin{split}
	& h(k \mid s, r) - h(l \mid s, r) = k(Z(s) - Z(r)) + (s - r)^T\cs{k} - l(Z(s) - Z(r)) - (s - r)^T\cs{l}  \\
	&\quad = (k - l)\big(Z(s) - Z(r) + (s - r)^T\frac{\cs{k} - \cs{l}}{k - l}\big) = g(k - l, \freq{l + 1, k} \mid s, r)\quad.
\end{split}
\]
This completes the proof.
\end{proof}

\begin{proof}[Proof of Theorem~\ref{thr:ascent}]
Write $y = \score{[1, m], [m + 1, e]}$ and define
\[
	x = \max_{k < m} \score{[1, k], [k + 1, e]} \quad\text{and}\quad  z = \max_{k > m} \score{[1, k], [k + 1, e]}\quad.
\]
We need to show that either $x \geq y$ or $z > y$. Assume that $z \leq y$.
Fix $\epsilon > 0$.  By definition, there exist $s$ and $r$ such that 
\[
	\score{[1, m] \mid s} + \score{[m + 1, e] \mid r} \geq y - \epsilon\quad.
\]
From now on we will write $h(k)$ to mean $h(k \mid s, r)$ and $g(k, \delta)$ to mean $g(k, \delta \mid s, r)$.
We must have $h(m) + \epsilon \geq y \geq z$ or, equivalently, $\epsilon \geq  z - h(m)$.

Since $\diff{D[1, m], D[m + 1, e]}$ is a cover, there exist integers $l$ and $n$, $0 \leq l < m < n \leq e$, such that $(\alpha - \beta)^T(s - r) \geq 0$, where
$\alpha = \freq{m + 1, n}$ and $\beta = \freq{l + 1, m}$.

Define $c = (n - m) / (m - l)$.  We now have
\[
\begin{split}
	\epsilon & \geq z - h(m) \geq h(n) - h(m) = g(n - m, \alpha) = cg(m - l, \alpha) \\
	& = cg(m - l, \beta) + c(m - l)(s - r)^T(\alpha - \beta) \geq cg(m - l, \beta) \\
	& = c(h(m) - h(l)) \geq c(h(m) - x) \geq c(y - \epsilon - x), \\
\end{split}
\]
which implies $y - x \leq \epsilon(1 + c^{-1}) \leq \epsilon(1 + e)$. Since this holds for any $\epsilon > 0$, we conclude that $y \leq x$.
This proves the theorem.
\end{proof}

\begin{proof}[Proof of Theorem~\ref{thr:suff}]
Let $P$ a segmentation
and let $I$ and $J$ be two consecutive segments such that $\diff{D[I], D[J]}$ is a cover.
We can now apply Theorem~\ref{thr:ascent} to find alternative segments $I'$ and $J'$ such that
if we define $P'$ by replacing $I$ and $J$ from $P$ with $I'$ and $J'$ then
either $\score{P' \mid D} > \score{P' \mid D}$ or $\score{P' \mid D} \geq \score{P' \mid D}$ and $I'$
ends before $I$. We repeat this until no consecutive segments constitute a cover.
This repetition ends because no segmentation will occur twice during these steps and there is a finite
number of segmentations. The reason why no segmentation occur twice is 
because either the score properly increases or the score stays the same and we move
a breakpoint to the left.
\end{proof}

\subsection{Proof of Theorem~\ref{thr:btreecorrect}}
Let $U$ be the resulting tree from $\textsc{UpdateTree}(T, C, D, i)$.
To prove the theorem we need to show that the paths of $U$ from leafs to 
the root consists of borders, there are no
nodes in $U$ outside the borders, and that children are ordered. We will prove these results
in a series of lemmata. 

\begin{lemma}
Let $T'$ be a tree after we have added a node $i$ in \updatetree.
Let $n \neq i$ be a node in $T'$ and let $m$ be its parent.
Let $c \in C$ be such that $n \in \borders{c, i - 1}$.
If $m \notin \borders{c, i}$, then $n$ will cease to be a child of $m$
during some stage of \updatetree.
\label{lem:prune}
\end{lemma}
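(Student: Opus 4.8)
The plan is to mimic \textsc{Update} (Algorithm~\ref{alg:border}) inside the tree and argue by induction on the distance from $n$ to $i$ along the path of $c$ in $T'$. First I would identify $m$ explicitly. Writing $\enpr{b_1}{b_M} = \borders{c, i-1}$ with $n = b_j$ and setting $b_{M+1} = i$, Theorem~\ref{thr:tree} shows that the successor of $n$ is the same in every border list through $n$, so the path of $c$ in $T'$ is $c = b_1 \to \cdots \to b_M \to i \to r$ and $m = b_{j+1}$. Applying Theorem~\ref{thr:update} with starting index $c$, there is a maximal $N$ with $\freq{b_{N-1}, i} < \freq{b_N, i}$ and $\borders{c, i} = \enpr{b_1}{b_N}$; since $m = b_{j+1} \notin \borders{c, i}$ we must have $N \le j$, so by maximality $\freq{b_j, i} \ge \freq{b_{j+1}, i}$, i.e. $\freq{n, i} \ge \freq{m, i}$ (when $m = i$ this reads $\freq{n, i} \ge \freq{i, i}$ and is the base case of the induction). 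This inequality is exactly the test on Line~\ref{alg:tree:test} that detaches $n$ once $n$ is the first child of $m$; and since $\freq{x, i}$ is a weighted average of $\freq{x, m-1}$ and $\freq{m, i}$ for any $x < m$, it is equivalent to $\freq{n, m-1} \ge \freq{m, i}$, a form I will need for comparisons.

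Given this, the induction reduces to two claims. \textbf{(a) $m$ is eventually visited by the while loop.} If $m = i$ it is the start node. Otherwise the parent of $m$ in $T'$ is $b_{j+2}$, and since $\borders{c, i}$ is a prefix of $\borders{c, i-1}$ followed by $i$, the inclusion $m \notin \borders{c, i}$ also gives $b_{j+2} \notin \borders{c, i}$; the inductive hypothesis, applied to the node $m$ with parent $b_{j+2}$, then says that $m$ ceases to be a child of $b_{j+2}$ — and since inspecting Algorithm~\ref{alg:tree} shows that the only way a node loses its parent is by being reattached to the root, $m$ is reattached to the root and processed immediately afterwards. \textbf{(b) When $m$ is processed, $n$ is detached from it.} Call a child $x$ of $m$ in $T'$ \emph{dying} if $m \notin \borders{c', i}$ for a list $c'$ in which $x$ is the predecessor of $m$ (by the argument above this property of $x$ does not depend on the choice of such $c'$), so $n$ is dying. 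As in the first step, a dying child satisfies $\freq{x, m-1} \ge \freq{m, i}$, whereas a \emph{surviving} child $x'$, being a border of $\borders{c'', i}$ immediately preceding $m$, satisfies $\freq{x', m-1} < \freq{m, i}$ by Theorem~\ref{thr:bordfreq}. If in addition every dying child is smaller than every surviving child, then the left-to-right sweep on Line~\ref{alg:tree:test} reattaches exactly the dying children — $n$ among them — because a reattachment only moves nodes to the root and never adds children to $m$, so the child list of $m$ only shrinks while $m$ is processed.

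The crux is the ordering claim in (b). I would use that a child $x$ of $m$ arising from a list $\borders{c', i-1}$ equals $\max \borders{c', m-1}$: because $m \in \borders{c', i-1}$, no border $\le m$ is discarded at any step between $m-1$ and $i-1$ (a step discards only a suffix of the current border list), and no new border below $m$ appears after step $m-1$, so the predecessor of $m$ in $\borders{c', i-1}$ is the largest border of $D[c', m-1]$; Theorem~\ref{thr:bordmax} then gives $\freq{x, m-1} = \max_{c' \le k \le m-1}\freq{k, m-1}$. For a dying $x$ (from list $c'$) and surviving $x'$ (from list $c''$) we have $\freq{x, m-1} \ge \freq{m, i} > \freq{x', m-1}$, so the largest suffix average over $D[c', m-1]$ strictly exceeds that over $D[c'', m-1]$; hence $c' < c''$, this maximum is attained only in $[c', c'')$, and since $x$ attains it, $x < c'' \le x'$. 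The boundary readings — $m = i$, where $m-1$ means $i-1$ and the predecessor of $i$ is $\max \borders{c', i-1}$, and the impossibility of $x < c'$ since $x$ is a border of $D[c', \cdot]$ — are routine. The only genuine difficulty is this ordering argument; everything else is bookkeeping around Algorithm~\ref{alg:tree} and the incremental structure of $\borders{\cdot}$.
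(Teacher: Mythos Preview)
Your proof is correct and rests on the same key observations as the paper's, but it is organized differently. The paper runs a single induction on the \emph{pre-order} of $T'$ (parents and earlier siblings first), which handles ancestors and left siblings uniformly: for a left sibling $q$ of $n$ it notes $q<n$ from the child ordering (Lemma~\ref{lem:postorder}), applies Theorem~\ref{thr:bordmax} to $q=\max\borders{p,m-1}$ to get $\freq{q,m-1}\ge\freq{n,m-1}\ge\freq{m,i}$, concludes $m\notin\borders{p,i}$, and then invokes the pre-order induction hypothesis on $q$ directly. You instead restrict the induction to the path toward $i$ (your part~(a)) and treat siblings by a global, non-inductive claim that dying children precede surviving ones, proved via the same identification $x=\max\borders{c',m-1}$ and a comparison of the maxima $\max_{c'\le k\le m-1}\freq{k,m-1}$. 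Your ordering claim is a little stronger than what the paper actually needs (it only requires that each left sibling of $n$ is dying), but your argument for it is sound. Both proofs are equally informal about the control-flow assertion that the while-loop keeps returning to $m$ until $n$ is its first child; the paper simply writes ``there will be a point where $a=m$ and $n$ is the first child of $m$''. The paper's single induction is more compact; your decomposition makes the two obligations (reach $m$; clear its left children) more explicit.
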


\begin{proof}
Let $r$ be a root node of $T'$.  Consider a pre-order of
nodes of $T'$, that is, parents and earlier siblings come first.  We will prove the lemma
using induction on the pre-order.

To prove the first step, let $n$ be the first child of $i$.  If $i \notin
\borders{c, i}$, then Theorem~\ref{thr:update} implies that $\freq{n, i}
\geq \freq{i, i}$ which is exactly the test on Line~\ref{alg:tree:test}. Hence,
$n$ will be disconnected from $i$.

Let us now prove the induction step. 
Let $p$ be the parent of $m$ in $T'$. Assume that $p \neq r$.  Note that $p$ is the
border next to $m$ in $\borders{c, i - 1}$.  
Theorem~\ref{thr:update} implies that $p \notin \borders{c, i}$, hence the induction assumption
implies that $m$ and $p$ are disconnected and $m$ becomes a child of $r$ at some point.

Assume now that $n$ is not the first child of $m$ and let $q$ be the sibling
left to $n$, and let $p$ be such that $q \in \borders{p, i - 1}$.
Theorem~\ref{thr:bordmax} implies that $\freq{q, m - 1} \geq
\freq{j, m - 1}$ for any $q \leq j < m$.  Since $n > q$, we must have $\freq{q,
m - 1} \geq \freq{n, m - 1} \geq \freq{m, i}$,
which implies that $m \notin \borders{p, i}$. Again, the induction assumption
implies that $q$ and $m$ will be disconnected. Consequently, $n$ will be the
first child of $m$ at some point.

Note that while moving $m$ or left siblings of $n$ to be children of $r$ we
move the current node $a$ in \updatetree to the left. Hence, there will be a
point where $a = m$ and $n$ is the first child of $m$. Theorem~\ref{thr:update}
implies that $\freq{n, i} \geq \freq{m, i}$ which is exactly the test on
Line~\ref{alg:tree:test}. Hence, $n$ will be disconnected from $m$. This proves
the lemma.
\end{proof}

\begin{lemma}
For every $c \in C$, a path in $U$ from $c$ to a child of the root node $r$ equals $\borders{c, i}$.
\end{lemma}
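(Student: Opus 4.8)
The plan is to fix $c \in C$ and follow each node of the path from $c$ to the root through the run of \updatetree. Write $\enpr{c = b_1}{b_M} = \borders{c, i - 1}$; since $T = \btree{D[1, i - 1], C}$, after \updatetree inserts the new node $i$ between the root and its children, the path from $c$ upward in the resulting tree $T'$ is $\enpr{b_1}{b_M, i}$, then the root $r$. Set $b_{M + 1} = i$ and let $N$, $1 \le N \le M + 1$, be the integer furnished by Theorem~\ref{thr:update}, so that $\borders{c, i} = \enpr{b_1}{b_N}$. I will show that in the final tree $U$ the nodes $b_1, \ldots, b_N$ still form a path with $b_1 = c$ at the bottom and $b_N$ a child of the root, and that none of them is deleted; this is exactly the assertion of the lemma.

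The technical ingredient, which I would prove first, is a monotonicity fact: if $b < b'$ are consecutive borders of a sequence $E$, then $\freq{E[b, \abs{E}]} < \freq{E[b', \abs{E}]}$. Indeed, since $b'$ is a border, Theorem~\ref{thr:bordfreq} applied to the pair $b < b' \le \abs{E}$ rules out $\freq{E[b, b' - 1]} \ge \freq{E[b', \abs{E}]}$, hence $\freq{E[b, b' - 1]} < \freq{E[b', \abs{E}]}$; and $\freq{E[b, \abs{E}]}$ is a convex combination of $\freq{E[b, b' - 1]}$ and $\freq{E[b', \abs{E}]}$ with a positive weight on the smaller term, so the claim follows.

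Next I would argue that the lower part of the $c$-chain survives. The only structural operations \updatetree performs are deleting a leaf not in $C$ and detaching the first child $b$ of the current node $a$ when $\freq{b, i} \ge \freq{a, i}$, reattaching $b$ directly under the root (Line~\ref{alg:tree:test}); in particular the root is never the active node, so once a node becomes a child of the root it remains one. For $1 \le j < N$ the indices $b_j < b_{j + 1}$ are consecutive borders of $D[c, i]$, so $\freq{b_j, i} < \freq{b_{j + 1}, i}$ by the monotonicity fact; hence at any moment when $b_j$ is the first child of $b_{j + 1}$ the detaching test fails, and since a node can be detached only when it is a first child whose test succeeds, $b_j$ is never separated from $b_{j + 1}$. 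Working upward from $b_1 = c \in C$, each $b_j$ with $1 \le j \le N$ always keeps $b_{j - 1}$ (or, for $j = 1$, its membership in $C$) as a witness that it is never a deletable leaf, so no $b_j$ is deleted, and the chain $b_1 \to \cdots \to b_N$ stays intact in $U$.

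It remains to locate $b_N$. If $N = M + 1$ then $b_N = i$, which is already a child of the root, and the defining property of $N$ gives $\freq{b_M, i} < \freq{i, i}$, so the test at $a = i$ does not detach $b_M$; together with the previous paragraph the path from $c$ in $U$ is $\enpr{b_1}{b_M, i} = \borders{c, i}$. If instead $N \le M$, then $b_N \ne i$, $b_N \in \borders{c, i - 1}$, and the parent of $b_N$ in $T'$ is $b_{N + 1} \notin \borders{c, i}$, so Lemma~\ref{lem:prune} guarantees that $b_N$ is detached from $b_{N + 1}$ at some stage; it can only be reattached under the root, where it then stays, and it is not deleted, so $b_N$ ends up as a child of the root. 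Since the edges $b_j \to b_{j + 1}$ for $j < N$ persist, none of $b_1, \ldots, b_{N - 1}$ is a child of the root, hence the path from $c$ to a child of the root in $U$ is exactly $\enpr{b_1}{b_N} = \borders{c, i}$. I expect the main obstacle to be the bookkeeping in the survival step: one must verify that the first-child detaching test really is the only way an edge of the $c$-chain can be broken — even though unrelated border lists may pass through the same nodes — and that the retained nodes never transiently become deletable leaves; the monotonicity fact is exactly what makes this argument close.
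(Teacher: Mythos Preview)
Your argument is correct and shares the paper's overall plan: write $\borders{c,i-1}=(b_1,\ldots,b_M)$, set $b_{M+1}=i$, obtain $\borders{c,i}=(b_1,\ldots,b_N)$ from Theorem~\ref{thr:update}, and use Lemma~\ref{lem:prune} to see that $b_N$ is cut from $b_{N+1}$ and lands under the root. The difference lies in how the lower chain $b_1\to\cdots\to b_N$ is shown to survive. The paper checks only the single edge $b_{N-1}\to b_N$: the defining inequality of $N$ in Theorem~\ref{thr:update} makes the test on Line~\ref{alg:tree:test} fail there, and the paper then relies (implicitly) on the fact that \updatetree only ever sets $a$ to a child of the root --- once $b_{N-1}$ stays under $b_N$ it never becomes a child of the root, so the algorithm never even examines the edges below, and inductively the whole chain is safe. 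You instead prove a standalone monotonicity fact for consecutive borders (via Theorem~\ref{thr:bordfreq}) and conclude that the detaching test would fail at \emph{every} level $j<N$, whether or not the traversal reaches it. Your route costs an extra lemma but is self-contained and does not depend on tracking which nodes the while-loop actually visits; the paper's route is terser but leans on an unstated structural observation about the traversal order.
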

\begin{proof}
Fix $c \in C$ and let $\enpr{b_1}{b_M} = \borders{c, i - 1}$ and define $b_{M +
1} = i$. Theorem~\ref{thr:update} implies that there is $1 \leq N \leq M + 1$
such that $\enpr{b_1}{b_N} = \borders{c, i}$.

After adding $i$ to $T$, \updatetree will not add new nodes into the path from $c$
to $r$. Lemma~\ref{lem:prune} now
implies that the path from $c$ to $r$ will be $\enpr{b_1}{b_K}$, where $K \leq N$. If $N = 1$,
then immediately $K = 1$.  To conclude that $K = N$ in general, assume that $N
> 1$ and assume that at some point in \updatetree we have $a = b_N$ and $b =
b_{N - 1}$. Then, according to Theorem~\ref{thr:update}, the test on
Line~\ref{alg:tree:test} will fail and $b_{N - 1}$ remains as a child of
$b_N$.
\end{proof}

\begin{lemma}
Let $n$ be a node in $U$, then there is $c \in C$ such that $n \in \borders{c, i}$.
\end{lemma}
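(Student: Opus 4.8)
The plan is to prove that every non-root leaf of $U$ lies in $C$, and then read off the statement from the preceding lemma. First I would record a structural fact about the input tree: every non-root leaf of $T = \btree{D[1, i - 1], C}$ belongs to $C$. Indeed, if $n$ is a non-root node of $T$ with $n \notin C$, then $n \in \borders{c, i - 1}$ for some $c \in C$; since $n \notin C$ and $\borders{c, i - 1} \subseteq \set{c, \ldots, i - 1}$, we have $c < n$, so the border of $D[c, i - 1]$ immediately preceding $n$ is, by construction of $\btree{\cdot}$, a child of $n$, and hence $n$ is not a leaf.

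Next I would establish the key claim: every non-root leaf of $U$ belongs to $C$. Suppose not, and let $\ell$ be such a leaf with $\ell \notin C$. Since \updatetree creates no node other than the initial $i$, either $\ell$ is a node of $T$ or $\ell = i$. If $\ell$ is a node of $T$, the fact above shows $\ell$ is not a leaf of $T$, and inserting $i$ between the root and its children does not alter $\ell$'s children, so $\ell$ has children when the while loop begins; if $\ell = i$ and $i$ has no children, then $\ell$ is a leaf at its first visit and, since $\ell \notin C$, is deleted, contradicting $\ell \in V(U)$, so we may also assume $\ell$ has children when the loop begins. Now observe the invariant that the current node $a$ is always a child of the root: it starts as $i$, the unique child of the root, and every assignment $a \define n$ preserves this, because $n$ is either the next sibling of $a$ or a node just attached to the root. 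Therefore the children of $\ell$ can leave $\ell$ only through the reattachment step, and each such removal occurs while $a = \ell$ and reattaches the current first child of $\ell$. Since a reattached child $b$ is placed immediately to the left of $a = \ell$ among the children of the root, $b$'s next sibling is $\ell$, so the traversal returns to $a = \ell$ once it has finished with $b$ and the descendants lifted out of it. Iterating, after the last child of $\ell$ has been reattached the loop reaches an iteration with $a = \ell$ a leaf, and since $\ell \notin C$ the node $\ell$ is deleted, again contradicting $\ell \in V(U)$. This proves the claim.

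Finally, I would conclude: given a non-root node $n$ of $U$, the subtree of $U$ rooted at $n$ contains a leaf $\ell$, which by the claim lies in $C$; by the preceding lemma the path in $U$ from $\ell$ to a child of the root equals $\borders{\ell, i}$, and $n$ lies on that path, whence $n \in \borders{\ell, i}$, as required. The main obstacle is the middle step: it is not a self-contained combinatorial fact but a statement about the dynamic behaviour of \updatetree, so one must carefully justify the invariant that the current node is always a child of the root and that, after lifting a subtree and processing it, the traversal returns to the node it was lifted from; this is exactly what guarantees that a node stripped of all its children is revisited and, if it is not a candidate, deleted.
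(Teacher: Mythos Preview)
Your argument is correct, but it is organized differently from the paper's. The paper does not pass through the leaf characterization at all: it takes a node $m$ of $T$ that does not belong to $\btree{D[1,i],C}$, observes that then $m\notin\borders{c,i}$ for every $c\in C$ (so in particular $m\notin C$), and applies Lemma~\ref{lem:prune} to the last child $n$ of $m$ to conclude that $n$ is eventually detached; since the proof of Lemma~\ref{lem:prune} already processes children from left to right, $m$ is then a leaf, is visited, and is deleted. Your route replaces the direct appeal to Lemma~\ref{lem:prune} by the root-child invariant and the explicit ``revisit'' argument, and then recovers the statement via the preceding path lemma. That has the virtue of spelling out the step the paper's proof leaves implicit (``we will visit $m$ when it is a leaf''), but since the preceding lemma itself rests on Lemma~\ref{lem:prune}, the overall dependency structure is unchanged. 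Both proofs ultimately hinge on the same mechanism: a node outside $C$ that has been stripped of all its children is revisited as a leaf and removed.
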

\begin{proof}
Let $m$ be a node that occurs in $T$ but not in $\btree{D[1, i], C}$.
The lemma will follow if we can show that $m$ is not in $U$.
Let $n$ be the last child of $m$. Lemma~\ref{lem:prune} 
implies that at some point $n$ will be disconnected from $m$ and we will visit $m$ when it
is a leaf, since $m \notin C$, we will delete $m$.
\end{proof}

\begin{lemma}
\label{lem:postorder}
Consider a post-order of nodes of $T = \btree{D[1, i - 1], C}$, that is,
parents and later siblings come first. Node values decrease with respect to this
order.
\end{lemma}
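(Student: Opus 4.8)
The plan is to isolate one structural property of $\btree{D[1,i-1],C}$ and then read off the lemma by induction on the tree. The property --- call it (S) --- is: whenever $a$ and $b$ are sibling nodes with $a>b$, every index in the subtree rooted at $a$ is strictly larger than every index in the subtree rooted at $b$. Granting (S), I would prove by structural induction that for every node $n$ the order restricted to the subtree of $n$ enumerates its indices strictly decreasingly and begins with $n$. Leaves are the base case. If $n$ has children $c_1<\dots<c_k$ (listed as stored, i.e.\ increasingly), the order visits $n$, then the whole subtree of $c_k$, then that of $c_{k-1}$, and so on down to that of $c_1$; here $n>c_k$ because a node exceeds its children (consecutive entries of a border list are sorted), each sub-block is decreasing by the induction hypothesis and begins with its own root, which is its largest index, and the step from the block of $c_{j+1}$ (which ends at $\min$ of that subtree) to the block of $c_j$ (which starts at $c_j=\max$ of that subtree) is a strict decrease precisely by (S), since $c_{j+1}>c_j$. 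Applying this at the root $r$ (which carries no index) yields the lemma.

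It remains to prove (S). Let $v$ be the common parent of $a$ and $b$, and pick $c,c'\in C$ with $a,v$ consecutive in $\borders{c,i-1}$ and $b,v$ consecutive in $\borders{c',i-1}$. Since a node is the largest index in its own subtree, $\max$ of the subtree of $b$ is $b$; and $\min$ of the subtree of $a$ is $\min\set{d\in C : a\in\borders{d,i-1}}$, because the smallest entry of any border list is its starting index and, by Theorem~\ref{thr:tree}, the descendants of $a$ are exactly the entries below $a$ of those border lists $\borders{d,i-1}$ that contain $a$. Thus (S) is equivalent to: no $d\in C$ has $a\in\borders{d,i-1}$ and $d\le b$. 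Suppose such a $d$ exists. First, $b\notin\borders{d,i-1}$: otherwise, since $b<a$ and $a\in\borders{d,i-1}$, Theorem~\ref{thr:tree} forces $v$ to follow $a$ in $\borders{d,i-1}$ as well, so $b$ would be a proper descendant of $a$, contradicting that the parent of $b$ is $v$, the parent of $a$. Since $b\notin\borders{d,i-1}$, Theorem~\ref{thr:bordfreq} applied to $D[d,i-1]$ gives $x,y$ with $d\le x<b\le y\le i-1$ and $\freq{x,b-1}\ge\freq{b,y}$. If $c'\le d$, then $c'\le x$, and the same pair contradicts --- via Theorem~\ref{thr:bordfreq} for $D[c',i-1]$ --- that $b\in\borders{c',i-1}$. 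If $c'>d$, then from $a\in\borders{d,i-1}$ and $a>b\ge c'$ Theorem~\ref{thr:share} gives $a\in\borders{c',i-1}$; but then $b<a<v$ all lie in $\borders{c',i-1}$, contradicting that $v$ is the border immediately after $b$ there. Either way we reach a contradiction, which proves (S).

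I expect (S) to be the only real difficulty, and inside it the case $c'>d$, where Theorems~\ref{thr:share} and~\ref{thr:tree} have to be played against each other; everything else --- the outer induction and the remarks that a parent exceeds its children and that the root of a subtree is its maximum --- is a routine consequence of border lists being sorted.
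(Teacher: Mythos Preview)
Your argument is correct, and the overall strategy matches the paper's: both isolate the claim that for siblings $m<n$ the left sibling $m$ lies strictly below the minimum of the subtree rooted at $n$, and then read off the lemma. The paper states this as ``$m<q$'' for $q$ the smallest leaf below $n$ (noting $q\in C$), which is exactly your $\min\{d\in C: n\in\borders{d}\}>m$; your induction deriving the lemma from (S) is just a more explicit version of the paper's ``this automatically proves the lemma''.

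The difference is in how the key claim is argued. You fix the leaf $c'$ with $b,v$ consecutive in $\borders{c'}$, assume a bad $d\le b$ with $a\in\borders{d}$, and split on $c'\le d$ versus $c'>d$; the paper instead fixes any $c\in C$ with $m\in\borders{c}$ and splits on $c\ge q$ versus $c<q$. Both case analyses work, but the paper gets by with Theorem~\ref{thr:share} alone, whereas you route the $c'\le d$ case through Theorem~\ref{thr:bordfreq}. That detour is avoidable: once you have $b\notin\borders{d}$, Theorem~\ref{thr:share} applied to $b\in\borders{c'}$ with $c'\le d\le b$ already forces $b\in\borders{d}$, an immediate contradiction. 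So the paper's version is a little leaner, while yours is more explicit about the subtree-minimum identification and the outer induction.
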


\begin{proof}
We will prove that the following holds:
Let $n$ be a node and let $m$ be its left sibling. Let $q$ be the smallest
child of $n$. Then $m < q$. Note that this automatically proves the lemma.

Note that $q \in C$. To prove that $m < q$, let $c \in C$ such that $m \in
\borders{c, i - 1}$.  If $c \geq q$, then since $n > m \geq c$,
Theorem~\ref{thr:share} implies that $n \in \borders{c, i - 1}$ which is a
contradiction.
Consequently, $c < q$. If $q \leq m$, then again Theorem~\ref{thr:share} implies that $m \in
\borders{q, i - 1}$ which is a contradiction. This proves that $m < q$. 
\end{proof}

\begin{lemma}
Child nodes of each node in $U$ are ordered from smallest to largest.
\end{lemma}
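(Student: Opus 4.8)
The plan is to isolate a purely structural invariant of the tree that \updatetree maintains, and check it against the (only three) operations that actually change the tree. For a node $v$ of a border tree write $\Sigma(v)$ for the set of indices occurring in the subtree rooted at $v$ (so $v\in\Sigma(v)$). I would carry two invariants: (P1) every non‑root node has an index strictly larger than each of its proper descendants; and (P2) whenever $\mu_1,\ldots,\mu_q$ are the children of some node (the root included), listed in the stored order, one has $\mu_j<\min\Sigma(\mu_{j+1})$ for every $j<q$. Since (P2) forces $\mu_1<\cdots<\mu_q$, establishing (P1) and (P2) for the output tree $U$ is exactly the assertion of the lemma; and it also completes Theorem~\ref{thr:btreecorrect}, because the preceding lemmata already identify the node set and the parent map of $U$ with those of $\btree{D[1,i],C}$, leaving only the sibling ordering to pin down.

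I would first check (P1) and (P2) on the input $T=\btree{D[1,i-1],C}$ and after the preamble line of \updatetree. For $T$ both follow from Lemma~\ref{lem:postorder}: in the post‑order used there (parents and later siblings first) node values decrease, so for consecutive siblings $\mu_j<\mu_{j+1}$ the block of $\mu_{j+1}$ — which is exactly $\Sigma(\mu_{j+1})$ — entirely precedes $\mu_j$, giving $\min\Sigma(\mu_{j+1})>\mu_j$; comparing $v$'s block with $v$ itself gives (P1). Inserting the new node $i$ between the root and its children makes $i$ the unique child of the root (so (P2) at the root is vacuous), hands $i$ the old children of the root unchanged (so (P2) at $i$ is (P2) at the old root), leaves all other nodes alone, and is consistent with (P1) since $i$ exceeds every index of $T$. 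Then I would check the loop steps, noting that merely advancing the cursor changes nothing: \textbf{deleting a leaf} $a$ makes its parent lose one child, and from old‑(P2) the merged consecutive pair $\mu_{k-1}<\mu_k=a<\min\Sigma(\mu_{k+1})$ still satisfies (P2); \textbf{re‑attaching the first child $b$ of the current node $a$ left of $a$ among the root's children} — where one first notes, by an easy induction, that the cursor $a$ is always a child of the root, so the step is meaningful — preserves (P1) because $\Sigma(b)$ is moved intact and $a$ only loses descendants, and for (P2) only two new consecutive pairs arise: $(p,b)$ with $p$ the old left neighbour of $a$, handled by $p<\min\Sigma(a)\le\min\Sigma(b)$ since $\Sigma(b)\subseteq\Sigma(a)$; and $(b,a)$, for which $\Sigma(a)$ after the step is $\{a\}\cup\bigcup_{j\ge2}\Sigma(b_j)$ with $b=b_1,b_2,\ldots$ the old children of $a$, so old‑(P2) at $a$ gives $b<\min\Sigma(b_2)\le\min\Sigma(b_j)$ for $j\ge2$ and (P1) gives $b<a$, whence $b<\min\Sigma$ of the new subtree of $a$.

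The one place that needs genuine care is the $(b,a)$ inequality in the re‑attachment step, together with the bookkeeping that the cursor really does stay on a child of the root; everything else is routine propagation of (P1) and (P2) through the operations. Once these two invariants are seen to hold on entry, to survive the preamble line, and to be preserved by every subsequent modification, they hold for $U$; reading (P2) at each node of $U$ then says precisely that its children are stored from smallest to largest, which is the lemma, and together with the earlier lemmata this closes Theorem~\ref{thr:btreecorrect} as well.
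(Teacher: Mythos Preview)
Your argument is correct, but it takes a different and considerably longer route than the paper. The paper's proof is a two-line observation: the re-attachment step of \updatetree (moving the first child of $a$ to be the left sibling of $a$) does not change the post-order of the nodes at all, and neither does deleting a leaf or prepending $i$ above the old root children; hence the ``values decrease along post-order'' property established for $T$ in Lemma~\ref{lem:postorder} passes unchanged to $U$, and that property immediately gives the sibling ordering. You instead introduce two structural invariants (P1) and (P2) and verify by hand that each of the three tree operations preserves them. Both arguments lean on Lemma~\ref{lem:postorder} for the base case, but the paper exploits it globally via post-order preservation, whereas you unpack it into local inequalities that are pushed through each modification. Your version is more explicit and self-contained (and your side remark that the cursor stays on a child of the root is a useful sanity check that the paper leaves implicit); the paper's version is shorter and arguably more elegant once one sees that the relevant operation is exactly a post-order-neutral rotation.
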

\begin{proof}
\updatetree modifies the tree by moving the first child of a node $a$ to be the
left sibling of $a$. This does not change the post-order of the nodes.  This
implies that, since node values decrease with respect to the post-order in $T$, they
will also decrease in $U$. This proves the lemma.
\end{proof}

\end{document}